\documentclass{llncs}
\usepackage{amsmath,amsfonts,amssymb,amscd}
\usepackage{enumerate}
\usepackage{mathrsfs}
\usepackage[x11names]{xcolor}
\usepackage{graphicx}
\usepackage{booktabs}       %
\usepackage{listings}
\usepackage[camera]{dtrt}
\usepackage{framed}
\usepackage[operators,probability,sets]{cryptocode}
\usepackage{mdframed}
\usepackage{xspace}
\usepackage{tikz}
\usepackage{upgreek}
\usepackage{enumitem}

\usepackage[caption=false]{subfig}
\captionsetup{font=small}

\usepackage{etoolbox}
\AtEndEnvironment{proof}{\phantom{}\qed}

\newenvironment{frameditemize}[1] 
{\begin{mdframed}[linewidth=1pt, frametitle={#1},frametitlealignment=\centering]
\begin{itemize}[leftmargin=*]}
{\end{itemize}
\end{mdframed}}
\newenvironment{myproof}{\begin{proof}}{\hfill \end{proof}}

\newtoggle{anonymous}
\newtoggle{comments}
\newtoggle{short}
\togglefalse{anonymous}
\toggletrue{comments}
\toggletrue{short}

\newcommand{\wenhao}[1]{\dtnote[WW]{#1}}

\newcommand{\fanz}[1]{\dtcolornote[FZ]{blue}{#1}}
\newcommand{\aviv}[1]{\dtnote[AY]{#1}}
\newcommand{\tool}{$Proo\upvarphi$\xspace}

\newcommand{\capacity}{{s}}
\newcommand{\txfee}{{f}}
\newcommand{\cost}{{p}}

\newcommand{\numprover}{N}
\newcommand{\numuser}{n}

\newcommand{\capsum}{\bar{S}}

\usepackage[style=numeric,backend=biber,maxbibnames=2]{biblatex}
\addbibresource{./main.bib}
\setcounter{biburlnumpenalty}{9000}
\setcounter{biburllcpenalty}{9000}
\setcounter{biburlucpenalty}{9000}

\usepackage[unicode,naturalnames]{hyperref}
\usepackage[capitalize]{cleveref}

\makeatletter
\def\toclevel@title{-1}
\def\toclevel@author{0}
\makeatother
\title{\texorpdfstring{\tool}{Proofee}: A ZKP Market Mechanism}
\author{
Wenhao Wang\inst{1},
Lulu Zhou\inst{1},
Aviv Yaish\inst{1},
Fan Zhang\inst{1},
Ben Fisch\inst{1}, and
Benjamin Livshits\inst{2}
}
\institute{
Yale University, New Haven, CT, USA,
\and
Imperial College London, London, UK
}

\begin{document}
\maketitle
\begin{abstract}

Zero-knowledge proofs (ZKPs) are computationally demanding to generate.
Their importance for applications like ZK-Rollups has prompted some to outsource ZKP generation to a market of specialized provers.
However, existing market designs either do not fit the ZKP setting or lack formal description and analysis.

In this work, we propose a formal ZKP market model that captures the interactions between users submitting ZKP tasks and provers competing to generate proofs.
Building on this model, we introduce \tool, an auction-based ZKP market mechanism.
We prove that \tool is incentive compatible for users and provers, and budget balanced.
We augment \tool with system-level designs to address the practical challenges of our setting, such as Sybil attacks, misreporting of prover capacity, and collusion.
We analyze our system-level designs and show how they can mitigate the various security concerns.
\end{abstract}

\section{Introduction}
\label{sec:intro}

Zero-knowledge proofs (ZKPs) enable efficient verification of computation and are used by blockchain scalability solutions (e.g., ZK-Rollups~\cite{zksync,scrollTFM,starknet}), user authentication~\cite{baldimtsi2024zklogin}, data oracles~\cite{zhang2020deco}, and many more.
Despite recent efficiency improvements, generating ZKPs remains computationally expensive~\cite{ernstberger2024zk}, often requiring specialized infrastructure (e.g., GPU or ASIC).
This has naturally led to the emergence of \emph{ZKP markets}~\cite{nil,gevuloteco,roy2024succinct} that allow users to outsource proof generation to specialized provers.
Ideally, an effective market will not only improve user experience but also lower user costs by fostering competition among provers.
However, since such markets operate in an open and decentralized environment, challenges arise because users and provers might be malicious.

Market designs by both industry and academia~\cite{Taiko,gevuloteco,nil,scrollTFM} are limited.
For example, the elegant work of \citeauthor{gong2024v3rified}~\cite{gong2024v3rified} considers a market comprising one user and multiple provers.
However, proving multiple user transactions in batches is a key performance optimization in ZK-Rollups.
Moreover, ZKP applications, such as ZK-Rollups, are already serving a substantial user base~\cite{l2beat-activity}, highlighting the need for mechanisms that address multi-user scenarios.
On the other hand, the mechanism advanced by the commercial prover market  Gevulot~\cite{gevuloteco} supports multiple users but relies on a fixed fee for all tasks, implying users willing to pay more may have to wait longer than those willing to pay less.
A survey of the literature (see~\cref{sec:related}) shows that other proposed mechanisms either do not apply to our setting or lack formal specifications and analysis.

\parhead{This work}
We formally model ZKP markets and dissect several designs.
In particular, we propose \tool, a ZKP market mechanism that comprises a \textit{core auction mechanism} and \textit{system-level designs}.
At a high level, the core mechanism of \tool runs an auction between users and provers and allocates a set of low-cost provers with another set of high-value user tasks.
We formally analyze the properties of the core mechanism and show that it is budget-balanced and guarantees incentive compatibility of market participants.
\wenhao{rephrased here}
We then introduce system-level designs to enhance the core mechanism and mitigate security threats that the core auction mechanism alone cannot address, and provide an analysis of how these designs defend against untruthful capacity bids, Sybil attacks, and collusion.
The main challenges lie in modeling the ZKP market to capture the setup of practical systems, achieving desired security properties through game-theoretic designs (e.g., with auctions) and system-level designs (e.g., using cryptography), and developing rigorous formal analysis.

\subsection{ZKP Market Model}
We consider a model comprising users, who have ZKP generation tasks, and provers, who finish user tasks for rewards.
The market operates in rounds of auctions, and each auction handles a specific {\em type} of tasks, i.e., tasks of the same circuit and ZKP scheme.
In each auction, users submit tasks and specify a fee $\txfee$ for each task.
The fee reflects how much the user values this task, also called the {\em value} of a task.
Meanwhile, each prover specifies a capacity $\capacity$, i.e., the number of tasks they can handle within a predefined time frame (tailored to application needs), and their unit cost $\cost$ i.e., the cost to finish one given task.
Given user tasks and provers' capacities and unit costs, a market mechanism decides \textit{allocation} and \textit{payments}: it selects a subset of tasks that are to be proven, and for each one, the prover that would generate the associated proof, the user's payment, and the amount that can be collected as revenue by the prover. Note that user payments and the prover's revenue may differ from user-chosen fees.

Our model explicitly captures the fact that practical provers produce proofs in batch, by allocating user tasks in batches.
Assigning tasks of the same type to one prover is more efficient than assigning them to different provers, as it avoids context-switching costs (e.g., loading proving keys and circuits to the memory)
and can leverage efficient batch proofs available in some ZKP schemes~\cite{kate2010constant,chen2023hyperplonk,gabizon2019plonk}.
Although we use ZK-Rollup terminologies, our market model applies to other systems that include a market of verifiable services, such as zkLogin~\cite{baldimtsi2024zklogin}, where users need to generate ZKPs to authenticate their identity.
\done\fanz{include zkLogin example}

\subsection{Our Core Market Mechanism: \tool}
\label{sec:core-mechanism}
Inspired by second-price auction mechanisms such as VCG~\cite{vickrey1961counterspeculation} and classic double auction mechanisms~\cite{mcafee1992dominant}, we present our core ZKP market mechanism named \tool.
In this mechanism, we guarantee incentive compatibility by paying each allocated prover the ``second price'', i.e., the price reported by the unallocated prover with the lowest cost, and ensuring each allocated prover's capacity is used in full.
Concretely, the mechanism first hypothetically allocates the highest-paying user tasks to the lowest-cost provers and stops when reaching a task with a fee that no longer covers the cost of the prover.
The mechanism then selects this hypothetical set of provers excluding the one with the highest unit cost, and selects the highest value tasks up to the capacity of the selected provers.
The core mechanism of \tool is specified below.

\medskip
\begin{frameditemize}{Core Mechanism of \tool}
    \item The mechanism collects user bids $\{\txfee_i\}_{i = 1}^{\numuser}$, and prover bids $\{ (\capacity_j, \cost_j) \}_{j = 1}^{\numprover}$ with $\txfee_1 \geq \cdots \geq \txfee_\numuser$ and $\cost_1 \leq \cdots \leq \cost_\numprover$.
    Let $\capsum_j := \sum_{i = 1}^{j} \capacity_i$ denote the sum of the first $j$ provers' capacities.
    \item The mechanism determines the largest $\ell$ such that $\cost_{j + 1} \leq \txfee_{(\capsum_j + 1)}$, i.e. $\ell = \underset{j}{\arg\max} \{ \cost_{j + 1} \leq \txfee_{(\capsum_j + 1)} \}$.
    The first $\ell$ provers are allocated in full, with the first $\capsum_\ell$ user tasks.
    \item
    Each selected user is charged $\txfee_{\capsum_\ell + 1}$.
    Each selected prover with index $j \in [1, \ell]$ is $\capacity_j \cdot \cost_{\ell + 1}$.
\end{frameditemize}

To provide intuition, we run the mechanism through a simple example.
\begin{example}
\label{eg:brief-example-intro}
Suppose a market with $8$ tasks with values \done\fanz{make sure we use fees and values consitently} $(10,10,10,10,9,9,1,1)$, and $3$ provers with capacities $s_i$ and costs $p_i$ such that $(\capacity_1, \cost_1) = (4, 0), (\capacity_2, \cost_2) = (2, 1), (\capacity_3, \cost_3) = (2, 10)$.
When all parties are bidding honestly, we have $\ell = 1$.
The first $\capsum_\ell = \sum_{j =1}^\ell s_j = 4$ task are allocated, and are charged $f_{\capsum_\ell+ 1} = 9$.
Prover $1$ is paid $p_{\ell+1}=1$ per task, and has utility $(1 - 0) \times 4 = 4$.
\end{example}

One interpretation is that we first greedily match high-paying tasks with low-cost provers until none are left.
E.g., $\cost_1$ is matched $\txfee_1$ through $\txfee_4$, and so on.
Then, we compare a prover $k$'s cost $\cost_k$ with the highest fee of her matched tasks (which is precisely $\txfee_{\capsum_{k-1}+1}$). E.g., compare $\cost_1$ with $\txfee_1$, $\cost_2$ with $\txfee_5$, and $\cost_3$ with $\txfee_7$.
Let's call a prover {\em feasible} if $\cost_k \le \txfee_{\capsum_{k-1}+1}$.
Finally, the mechanism allocates {\em all but the last} feasible provers to their full capacity and calls the number of allocated provers $\ell$.
In this example, $\cost_1$ and $\cost_2$ are feasible, so $\ell=1$.
The last feasible prover's cost and its highest-paying task's fee set the prices for provers and users, respectively. I.e., provers are paid $\cost_{\ell+1}$ and users are charged $f_{\capsum_\ell +1}$.

We prove \tool achieves several desirable properties.
It is {\em budget-balanced} (\cref{prop:new-wbb}), i.e., the payments to provers are always covered by the fees collected from users.
Moreover, it is \textit{incentive compatible} for users and provers to bid honestly (\cref{prop:new-duic,prop:new-dpic}).

\subsection{Implementing \texorpdfstring{\tool}{Prooϕ} in an Open ZKP market}
The core mechanism alone leaves certain security threats open, thus we enhance it with system-level solutions, including slashing, fixing prover capacity bids over multiple rounds, and encrypting bids to hide information required for profitable deviations such as misreporting of prover capacity, Sybil attacks, and collusion.
{When considering risk-averse actors (in line with previous work making the same assumption~\cite{fisch2017socially,cong2020decentralized,roughgarden2021ignore,chung2023foundations}), our solutions thwart possible risk-free threats.}

\parhead{Incorrect or missing proofs}
To ensure the correctness and timeliness of the completion of the ZKP tasks, we require provers to deposit collateral when joining the protocol,
and seize the collateral of provers who generate incorrect ZKPs or do not generate them in a timely manner (as detailed in \cref{subsec:missing-zkp}).

\parhead{Misreporting capacity}
The core \tool mechanism is incentive compatible for provers, meaning that they truthfully report their \textit{proving costs}.
However, provers may profitably deviate in other ways, such as by misreporting their \textit{capacity}.
In~\cref{prop:capacity-misreport-fail}, we prove that such deviations are worthwhile only when the bids of other users and provers satisfy a narrow condition, implying that dishonest provers would have to monitor user bids and adjust their reported capacities correspondingly.
Following this observation and given that a prover's capacity is not likely to change drastically in a short amount of time, the threat can be mitigated by restricting the changes of the capacity bid over time; for example, a prover can only change the capacity every $n$ rounds by $m$ folds (where $n,m$ are parameters, possibly adjusted via, e.g., community votes in a DAO).

\parhead{Sybil attacks}
A malicious prover could pose as multiple actors (provers or users) and mount {\em Sybil attacks}: a prover may create several Sybil provers or submit ``fake'' user tasks \cite{gafni2023optimal}.
In~\cref{prop:sybil-good}, we prove that the threat of such attacks is limited.
Particularly, a prover splitting into multiple provers never harms the social welfare if such an attack is profitable.
In the other case, where a prover generates fake user bids to gain more profit, we show that such attacks are only profitable when the bids of the other users and provers satisfy certain conditions.
We show in~\cref{prop:prover-sybil-user} that if the prover has no information on the bids submitted by other parties, there always exists a possible configuration of these bids that would result in the prover having a loss.
Therefore, this type of Sybil attack can be mitigated by concealing user and prover bids from the provers, i.e., making the auctions sealed-bid.

\parhead{Collusion.}
Collusion can occur among different actors, e.g., users and provers or among a group of provers.
In~\cref{prop:collusion-prover-user,prop:collusion-two-prover}, we show that for the collusion to be definitely profitable, the colluders need to have full information of the other parties' bids.
Therefore, these types of collusion can be intuitively mitigated using a similar approach as we use for Sybil attacks, i.e., hiding the bids of other parties from provers.
Still, our protocol cannot properly circumvent all collusion (e.g., all provers can collude with each other and form a monopoly), but it is common for mechanism designs to disregard collusion~\cite{gong2024v3rified,huang2002design}.

\section{Related Work}
\label{sec:related}

\parhead{Commercial mechanisms for ZKP markets}
Several commercial platforms suggested ZKP market mechanisms without formal analysis.
Gevulot~\cite{gevuloteco} uses posted prices to determine the set of provers (i.e., the same predetermined fee per task is paid to selected provers);
Taiko~\cite{Taiko} lets provers set their prices; %
Scroll~\cite{scrollTFM} selects and partially subsidizes the lowest-cost provers and pays them with their bid\wenhao{TODO: add webarchive}; 
=nil;~\cite{nil} facilitates price discovery by maintaining a limit order book for each circuit with buy orders from users and sell orders from provers.
An order is executed when the buying and selling prices meet.
In \cref{tab:comparison}, we list the protocols across three dimensions:
how provers are selected (``Prover Selection''),
how user payments are determined (``User Payment''), and how the profit of provers is decided (``Payment to Provers'').

\begin{table*}[tb] 
\centering\footnotesize
\caption{Summary of prover market designs.} 
\label{tab:comparison}
\begin{tabular}{llll} 
\toprule
\textbf{Protocol}& \textbf{User Payment} & \textbf{Prover Selection} & \textbf{Payment to Provers} \\ \midrule
\tool & second price & lowest price & second price $\times$ allocated capacity   \\ 
=nil; & limit order book & limit order book & limit order book \\ %
Taiko & posted price & random selection & prover-specified payment \\ %
Scroll & first price & random selection & partially subsidized \\ %
Gevulot & posted price & random selection & posted price $\times$ allocated capacity \\
\bottomrule

\end{tabular}
\end{table*}

\parhead{Other computation outsourcing mechanisms}
\citeauthor{thyagarajan2021opensquare}~\cite{thyagarajan2021opensquare} propose OpenSquare, a Verifiable Delay Function (VDF) market that incentivizes maximum server participation, which is resistant to censorship and single-point failures.
In a broader sense, ZKP markets are computation outsourcing markets.
In V3rified~\cite{gong2024v3rified}, the fee mechanism for computational tasks is categorized into revelation ones (i.e., an auction where provers bid their costs) and non-revelation ones (the client posts its task with a given fee) and characterize their power and limitations.
The protocols in V3rified cannot be directly applied to our model, as they are tailored for multiple provers proving one user task.

\parhead{Double auctions}
Our setting is of a two-sided market: on the demand side, users submit transactions that require proving, while on the supply side, provers provide proving capacity.
Double auctions are commonly used to coordinate trade in such markets, i.e., to choose which agents get to trade and at what prices.
While there is some overlap between the ``traditional'' setting explored by auction theory literature and the ZKP market setting, we note that the latter introduces new challenges: provers may collude, either amongst themselves or with users, and provers have a capacity parameter to bid.
\citeauthor{mcafee1992dominant}~\cite{mcafee1992dominant} presents a mechanism that is incentive compatible, individually rational, and budget-balanced for unit-demand buyers and unit-supply sellers (i.e., each buyer wishes to purchase a single item, and each seller offers a single item for sale). %
\citeauthor{huang2002design}~\cite{huang2002design} show a mechanism that is also incentive compatible, individually rational, and budget-balanced for multi-unit buyers and sellers (i.e., buyers may want multiple items, and suppliers may sell multiple items). %
However, their work assumes public capacity information and no collusion, and they are not a direct generalization of our mechanism to multi-item buyers.
Although outside the scope of our work, we refer readers interested in a broad review of the literature to the survey of \citeauthor{parsons2011auctions}~\cite{parsons2011auctions}, which covers a variety of auction formats, including double auctions.

\section{Model}
\label{sec:modeling}

\subsection{ZKP Market}
In a ZKP market, user-specified ZKP generation tasks are outsourced to specialized parties (i.e., provers) for a price.
The market is specified by a core mechanism and a system-level protocol.
We now elaborate on these components.

\parhead{Roles}
We use $\numuser$ to denote the number of users in the market, and each user has a ZKP generation task.
Each user task has a value $\txfee$ that the user is willing to pay to complete the task.
There are $\numprover$ provers in the market willing to complete ZKP generation tasks for rewards.
Each prover has a ZKP generation capacity $\capacity$, i.e., the number of tasks it can complete within a fixed time window, and a unit cost $\cost$, i.e., its cost to complete one ZKP task.
Note that in our model, we assume that each prover's total cost is the sum of the costs of all its tasks.
Besides users and provers, a coordinator (or auctioneer) is responsible for collecting bids from users and provers and executing the market system and mechanism.
For instance, when applied to ZK-Rollups, the centralized sequencer can be the auctioneer.

\parhead{Core market mechanism}
In each round, the market mechanism selects a set of user tasks that can be proven and an allocation of the tasks to the provers.
Additionally, the core mechanism determines the amount of fee that needs to be collected from each user and determines the payment to each prover.

\parhead{Market system}
The market system specifies the additional steps and requirements for users and provers during the execution of the market mechanism.
An example is that the provers deposit some collateral before they are eligible to submit bids in the market.
The goal of the market system is to work in synergy with the market mechanism to secure the market against security threats.

\ignore{\parhead{Assumptions}
We further make the following assumptions for our analysis.
\begin{itemize}
    \item All parties are myopic, i.e., they are only concerned with their utility within one batch of tasks.
    \item Each user has exactly one task. Note that this is a widely used assumption \XXX \wenhao{TODO: cite a bunch of papers}
    \item If a user task is not included in a batch, its utility is zero; if a prover is not allocated to prove anything, its utility is zero.
    \fanz{1) this repeats the ``role'' paragraph; 2) I don't understand the argument here. regardless of whether a prover is fully allocated, its total cost can scale linearly or sublinearly, depending on how we model it. These two seem orthogonal.}
\end{itemize}}

\parhead{Utilities}
Here we introduce the utility of the market participants.
We assume that all parties are myopic, i.e., they are only concerned with their utility within one round of market.
We further assume that the auctioneer is trusted and therefore does not have a utility to maximize.
For a user whose valuation of its task is $\txfee$ and pays $\txfee'$, if the task is selected, its utility is $\txfee- \txfee'$; otherwise, if the task is not selected, its utility is $- \txfee'$.
For a prover whose unit cost is $\cost$, and is paid $w$ and allocated $\capacity'$ tasks, then its utility is $w - \capacity' \cdot \cost$.

\subsection{Threat Model}
Both users and provers can act strategically and deviate from the protocol arbitrarily.
The attacks on the market may occur in the following forms.

\parhead{Incorrect or missing ZKPs}
Provers may fail to generate correct ZKPs in the allotted time, whether intentionally or otherwise.

\parhead{Misreporting bids}
Both users and provers may not bid truthfully. 
If any actor can benefit from misreporting its bid, we consider this as an attack.

\parhead{Sybil attacks}
Besides misreporting bids, a user or a prover may pose as multiple actors and submit fake bids.
Sybil attacks can occur when a prover submits bids as fake users or a prover submits bids as fake provers.

\parhead{Collusion}
Users and provers may collude to bid strategically to increase their joint utility.
Not all forms of collisions harm players outside the coalition, but those that do are considered a security threat.

{
\begin{remark}
\label{rem:risk}
Previous work on TFMs typically adopted a non-Bayesian perspective, including when analyzing different possible threats, i.e., misreporting values, Sybil attacks, and collusion \cite{chung2023foundations,gafni2024discrete,gafni2024barriers,bahrani2024transaction,roughgarden2021transaction}.
In practice, blockchain actors such as Bitcoin miners who invest resources to receive rewards may prefer to minimize the financial risk entailed in their operation, with prior work seeing this as an explanation for the popularity of mining pools.
Therefore, we follow previous work that modeled actors as risk-averse~\cite{fisch2017socially,cong2020decentralized,roughgarden2021ignore,yaish2023correct}, who will deviate from the honest behavior only if it would certainly improve their payoffs~\cite{chung2023foundations}.
We note that alternative models could lead to interesting future work (see \cref{sec:conclusions}).
\end{remark}
}

\subsection{Desiderata}
\label{subsec:desiderata}

Following prior work~\cite{roughgarden2021transaction,mcafee1992dominant}, we devise a desiderata for ZKP market mechanisms.

\parhead{Budget balance}
In a ZKP market, the fees collected from users in each round should cover the payment to provers, a property called {\em budget balance}.
\begin{definition}[Budget Balance (BB)]
\label{def:budget-balance}
A mechanism is {\em budget-balanced} if the fees collected from the users are no less than the sum of payment to provers.
\end{definition}

\parhead{Incentive compatibility}
Our mechanism should incentivize actors to truthfully report their values, i.e., it should satisfy \cref{def:dsic}.
In \cref{def:uic,def:pic} we provide the corresponding definitions for users and provers, respectively.

\begin{definition}[Dominant Strategy Incentive Compatibility (DSIC)]
\label{def:dsic}
    A mechanism is dominant strategy incentive compatible if it is always best for each participant to bid its true valuation.
\end{definition}

\begin{definition}[User DSIC (UDSIC)]
\label{def:uic}
A mechanism UDSIC if the mechanism is DSIC for users.
\end{definition}

\begin{definition}[Prover DSIC (PDSIC)]
    \label{def:pic}
    A mechanism is PDSIC if it is DSIC for provers to bid their costs honestly.
\end{definition}
Here we note that with PDSIC, it is still possible that a prover may misreport its capacity and get extra utility.

\parhead{Sybil proofness}
Provers submitting bids under fake identities should not negatively impact social welfare if creating Sybils is profitable.
Recall that provers can bid under fake prover or user identities.
We define them respectively as {\em prover Sybil proofness} (\cref{def:prover-sybil-proof}) and {\em user Sybil proofness} (\cref{def:user-sybil-proof}).
\begin{definition}[Prover Sybil Proofness]
\label{def:prover-sybil-proof}
A mechanism is prover-Sybil-proof if other parties' utilities are not reduced when any prover profits from creating prover Sybils.
\end{definition}

\begin{definition}[User Sybil Proofness]
\label{def:user-sybil-proof}
A mechanism is user-Sybil-proof if others' utilities are not reduced when any prover profits from creating user Sybils.
\end{definition}

\parhead{Collusion resistance}
A ZKP market mechanism is collusion resistant if a coalition of provers and users (\cref{def:user-collusion-resistance}) or a coalition of provers (\cref{def:prover-collusion-resistance}) results in higher joint utility and harms the utility of other parties.
\begin{definition}[User Collusion Resistance]
\label{def:user-collusion-resistance}
A ZKP market mechanism is user collusion resistant if a coalition of provers and users cannot simultaneously result in higher joint utility and harm the utility of other parties.
\end{definition}

\begin{definition}[Prover Collusion Resistance]
\label{def:prover-collusion-resistance}
A ZKP market mechanism is prover collusion resistant if a coalition of provers cannot simultaneously result in higher joint utility and harm the utility of other parties.
\end{definition}

\section{Core Mechanism of \tool}
\label{sec:mechanism-double-auction}
In this section, we analyze the game-theoretic properties of the core mechanism (as specified in~\cref{subsec:desiderata}): we prove that it satisfies Budget Balance (\cref{prop:new-wbb}), UDSIC (\cref{prop:new-duic}), and PDSIC (\cref{prop:new-dpic}).
In the interest of space, we defer some of the proofs to \cref{sec:proof}.

\parhead{Budget balance}
A ZKP market mechanism is budget-balanced if the fees collected from all users are no less than the total payments to the provers.

\begin{proposition}
\label{prop:new-wbb}
The \tool mechanism is BB.
\end{proposition}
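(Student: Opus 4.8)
The plan is to unfold the mechanism's allocation and payment rules and reduce budget balance to a single scalar inequality that is already built into the definition of the cutoff index $\ell$. First I would dispose of the degenerate case $\ell = 0$: then no user task is selected and no prover is paid, so both the fees collected from users and the payments to provers are zero and \cref{def:budget-balance} holds vacuously. Assume henceforth $\ell \ge 1$.

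Next I would compute the two sides explicitly. By the allocation rule, exactly the first $\capsum_\ell$ user tasks are selected and each selected user is charged $\txfee_{\capsum_\ell + 1}$, so the total revenue collected from users is $\capsum_\ell \cdot \txfee_{\capsum_\ell+1}$. On the other side, each selected prover $j \in [1,\ell]$ is paid $\capacity_j \cdot \cost_{\ell+1}$, hence the total payment to provers is $\sum_{j=1}^{\ell}\capacity_j\cdot\cost_{\ell+1} = \cost_{\ell+1}\cdot\capsum_\ell$, using $\capsum_\ell = \sum_{j=1}^{\ell}\capacity_j$. Budget balance is therefore equivalent to $\capsum_\ell\cdot\txfee_{\capsum_\ell+1} \ge \capsum_\ell\cdot\cost_{\ell+1}$, i.e., since $\capsum_\ell > 0$, to the single inequality $\txfee_{\capsum_\ell+1} \ge \cost_{\ell+1}$.

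Finally, this inequality is precisely the feasibility/stopping condition that defines $\ell$: by construction $\ell = \underset{j}{\arg\max}\{\cost_{j+1}\le\txfee_{(\capsum_j+1)}\}$, so instantiating the maximized predicate at $j=\ell$ gives $\cost_{\ell+1}\le\txfee_{(\capsum_\ell+1)}$, which is exactly what is needed; multiplying by $\capsum_\ell\ge 0$ closes the argument. I do not expect a genuine obstacle here — the statement falls out of the mechanism's design almost immediately — so the only point deserving a sentence of care is the well-definedness of the two indices. The choice of $\ell$ as a maximizer guarantees that the unallocated prover $\ell+1$ exists, and on the user side one adopts the standard convention $\txfee_k = 0$ (equivalently, pad with zero-value dummy tasks) whenever $\capsum_\ell + 1 > \numuser$, in which case the required inequality is immediate from $\cost_{\ell+1}\ge 0$. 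It is also worth noting, as a byproduct of the same computation, that the surplus $\capsum_\ell\,(\txfee_{\capsum_\ell+1}-\cost_{\ell+1})\ge 0$ is exactly the amount left over for the auctioneer, which ties in with the individual-rationality and incentive-compatibility discussion that follows.
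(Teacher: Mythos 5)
Your proof is correct and follows essentially the same route as the paper's: both reduce budget balance to the single inequality $\cost_{\ell+1} \le \txfee_{\capsum_\ell+1}$, which holds by the definition of $\ell$. Your version merely spells out the per-side totals ($\capsum_\ell\cdot\txfee_{\capsum_\ell+1}$ versus $\capsum_\ell\cdot\cost_{\ell+1}$) and the edge cases more explicitly than the paper does.
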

\begin{myproof}
According to the \tool mechanism, 
each allocated task is charged $\txfee_{\capsum_\ell + 1}$, and each allocated prover is paid the unit price $\cost_{\ell  + 1}$.
Since the mechanism guarantees that $\txfee_{\capsum_\ell + 1} \geq \cost_{\ell + 1}$, it must be budget-balanced.
\end{myproof}

\parhead{User incentive compatibility}
A ZKP market mechanism is UDSIC if the mechanism is DSIC for users (\cref{def:uic}).
In~\cref{prop:new-duic}, we prove that the core mechanism of \tool is UDSIC.

\begin{proposition}
\label{prop:new-duic}
The \tool mechanism is UDSIC.
\end{proposition}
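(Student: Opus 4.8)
The plan is to prove truthful reporting is a weakly dominant strategy for every user via the standard single‑parameter argument. Fix a user $i$ with true value $v$ and freeze all other users' fees and all prover bids; write $b$ for $i$'s reported fee and let $k^\ast = k^\ast(b) := \capsum_\ell$ be the total capacity allocated by the mechanism, viewed as a function of $b$. I will establish three facts: \textbf{(M)} whether $i$ is served is monotone nondecreasing in $b$; \textbf{(P)} whenever $i$ is served, the price it pays equals a value $c$ (the $k^\ast$‑th largest fee among the \emph{other} users) that does not depend on $b$; and \textbf{(IR)} if the truthful bid $b=v$ serves $i$, then $c\le v$. Given these, a deviation from $v$ to any $b$ either leaves $i$'s outcome unchanged, or turns a served outcome (utility $v-c\ge 0$ by (IR)) into an unserved one (utility $0$), or turns an unserved truthful outcome (which by (P) and (M) forces $v<c$) into a served one at price $c$ (utility $v-c<0$). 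None improves $i$'s utility, so bidding $v$ is optimal, which is exactly \cref{def:uic} (instantiating \cref{def:dsic} for users).

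For \textbf{(M)} I would first record the order‑theoretic lemma: holding the other fees fixed, as $b$ increases, (a) $i$'s rank from the top weakly decreases, and (b) for every position $t$ the $t$‑th largest fee $\txfee_{(t)}$ of the merged sorted list weakly increases; both follow by tracking how inserting a larger $b$ shifts the list. Since the $\cost_j$ are fixed and sorted, (b) implies the predicate $\cost_{j+1}\le\txfee_{(\capsum_j+1)}$ defining $\ell$ is weakly preserved as $b$ grows, so $\ell$, and hence $k^\ast=\capsum_\ell$, weakly increases. (It is also worth noting the predicate is downward closed in $j$, because $\cost_j\le\cost_{j+1}$ and $\txfee_{(\capsum_{j-1}+1)}\ge\txfee_{(\capsum_j+1)}$, so $\ell$ is exactly one less than the number of ``feasible'' provers, and the feasible provers form a prefix $1,\dots,m$.) User $i$ is served iff its rank is at most $k^\ast$; by (a) the rank weakly decreases and $k^\ast$ weakly increases, so being served is monotone in $b$.

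The main obstacle is \textbf{(P)}: the allocated capacity $k^\ast$, and hence the price, must be shown constant over the whole range of $b$ for which $i$ is served. When $i$ is served at some $b$, its rank $r$ satisfies $r\le k^\ast$, so position $k^\ast+1$ of the merged list is occupied by an \emph{other} user's fee, and the charged price $\txfee_{(\capsum_\ell+1)}=\txfee_{(k^\ast+1)}$ is the $k^\ast$‑th largest of the other users' fees. Now take $b'>b$, still with $i$ served by (M); I claim $k^\ast(b')=k^\ast(b)$. Otherwise, since $k^\ast$ only increases, some prover $q$ that is infeasible at $b$ becomes feasible at $b'$. But $q$ infeasible means $q$ lies past the last feasible prover, so $\capsum_{q-1}\ge\capsum_\ell=k^\ast\ge r\ge r'$, whence the boundary position $\capsum_{q-1}+1$ lies strictly above $i$'s position at both bids; therefore $\txfee_{(\capsum_{q-1}+1)}$ equals the same fee (the $\capsum_{q-1}$‑th largest among the other users) at $b$ and at $b'$, and with $\cost_q$ fixed the feasibility of $q$ is unchanged, a contradiction. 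Hence $k^\ast$, and so the price, equals a fixed value $c$ on the entire serving range.

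Finally, combining (M) and (P): $i$ is served exactly for $b\ge c$ (ties among bidders with fee $c$ resolved by a fixed rule such as user index, which preserves monotonicity), and on that range the price is $c$, the critical threshold. If $b=v$ serves $i$ then $v\ge\txfee_{(k^\ast)}\ge\txfee_{(k^\ast+1)}=c$, giving \textbf{(IR)}; if $b=v$ does not serve $i$ then $v<c$, as used above. A little care handles tie‑breaking and the degenerate cases where $\ell$ is undefined or $k^\ast\equiv 0$ (then $i$ is never served and has utility $0$ for every $b$, so truthfulness is trivially optimal). Plugging (M), (P), (IR) into the dichotomy of the first paragraph completes the proof.
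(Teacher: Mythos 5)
Your Myerson-style decomposition (monotone allocation, bid-independent price on the serving range, individual rationality at the truthful bid) is a genuinely different route from the paper's, which argues directly by cases on whether user $i$ was allocated before the deviation and on how the new bid compares to $\txfee_{\capsum_\ell+1}$ (resp.\ $\txfee_{\capsum_{\tilde\ell}}$). Your facts (M), (P), and (IR) are correctly established; in particular the argument for (P) --- any prover $q$ infeasible at $b$ has boundary position $\capsum_{q-1}+1>k^\ast\ge r\ge r'$, so its boundary fee, and hence its feasibility, cannot change between two serving bids --- is sound and is essentially a compressed, more structural version of the paper's Case~1.1 ($\ell'=\ell$ and the price is unchanged).

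The gap is in the final paragraph, in the claim that ``$i$ is served exactly for $b\ge c$,'' which you justify only by ``combining (M) and (P).'' Those two facts give just one inclusion: the serving set is an up-set on which the allocated capacity is a constant $K$ and the price is the constant $c$ (the $K$-th largest fee among the \emph{other} users), and every serving bid satisfies $b\ge c$; they do \emph{not} rule out a gap in which some bid in $[c,\,\inf\{\text{serving bids}\})$ is unserved because $k^\ast$ there is still smaller than $K$. That unproven inclusion is exactly what your third dichotomy case relies on: truthful $v$ unserved, overbid $b'>v$ served at price $c$, and you must show $c\ge v$. Closing it needs an additional argument, for instance: if $\ell(b')>\ell(v)$, then prover $\ell(b')+1$ flips from infeasible to feasible, which (by the contrapositive of the boundary-stability reasoning you already use in (P)) forces $i$'s truthful rank to be at least $\capsum_{\ell(b')}+1=K+1$; hence at least $K$ other users bid at least $v$, so $c\ge v$ and the deviation yields utility $v-c\le 0$. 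This missing step is precisely the content of the paper's Case~2 --- the one it calls ``more involved'' and defers to the appendix --- where it is shown that $\ell'\le\tilde\ell$ and that an allocated deviator is charged at least $\txfee_{\capsum_{\tilde\ell}}\ge\txfee_i$. With that lemma supplied, your proof goes through.
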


\begin{proof}
We show that for any user, bidding differently than its valuation is not profitable.
Suppose user $i$ changes its bid from $\txfee_i$ to $\txfee$. We re-label the modified user bids as $\txfee'_1 \geq \cdots \geq \txfee'_\numuser$.
Similarly, we use primed variables throughout the proof to denote variables after user $i$ changes its bid, and those without primes denote variables before that.
I.e., $\ell'$ is the number of allocated provers and $\capsum'_{\ell'}$ the allocated number of tasks after $i$'s change. 
Note that prover bids do not change by assumption, so $\cost'_j = \cost_j$ and $\capsum'_j = \capsum_j$ for all $j = 1, \cdots, \numprover$.

We say a task has \textit{rank $i$} if its fee is the $i$-th highest among all tasks. We denote the rank of user $i$'s task before and after the deviation with $R(i)$ and $R'(i)$, respectively.
We now discuss the utility of user $i$ in two cases: when the task is allocated before the change (i.e., $R(i) \in [1, \capsum_\ell]$), and when it is not.

\parhead{Case 1: User $i$ is allocated before changing the bid.}
By the assumption that user $i$'s task is allocated before deviation, the fee $\txfee_i$ is at least $\txfee_{\capsum_\ell + 1}$, i.e., $\txfee_i \ge \txfee_{\capsum_\ell + 1}$.
By the definition of our mechanism, $\cost_{\ell + 1} \le \txfee_{\capsum_\ell + 1}$ (as shown in~\cref{fig:uic-proof-1})
We now consider two sub-cases, based on whether $f < \txfee_{\capsum_\ell + 1}$ or not, i.e., whether the modified fee is less than the ``second'' price before deviating.

\begin{figure}[h]
    \centering
    \includegraphics[width = 0.6\textwidth,page=6]{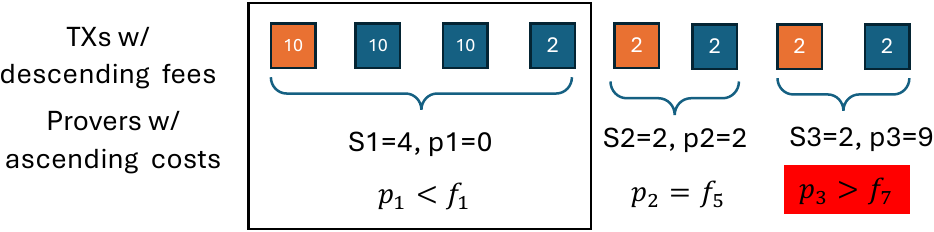}
    \caption{Two of the sub-cases examined in the proof of \cref{prop:new-duic}.}
    \label{fig:uic-proof-1}
\end{figure}

If $\txfee \ge \txfee_{\capsum_\ell + 1}$, we show that the task of user $i$ is still allocated, and the second price does not change.
Otherwise, if $\txfee < \txfee_{\capsum_\ell + 1}$, we show that the task of user $i$ is not allocated.
In both, the utility of user $i$ does not increase.

\textbf{Case 1.1: $\txfee \geq \txfee_{\capsum_\ell + 1}$.}
We now show that in this case, user $i$ is still allocated and the payment $\txfee'_{\capsum'_{\ell'} + 1}$ remains the same as $\txfee_{\capsum_{\ell} + 1}$.

Firstly, we show that $R'(i)\le\capsum_\ell$.
By assumption ($\txfee \geq \txfee_{\capsum_\ell + 1}$ and $\txfee_i \geq \txfee_{\capsum_\ell + 1}$),
tasks with a rank greater than $\capsum_\ell$ before the deviation will have the same rank after the deviation.
\Cref{fig:uicproof} shows an intuitive example.

\begin{figure}[h]
\centering
    \includegraphics[width=0.7\textwidth,page=5]{figure/proofee-crop.pdf}
    \caption{Example used in case 1.1 to show $R'(i) \le \capsum_\ell$. Labels show the ranks of tasks before $i$'s deviation. After the deviation, $i$'s new rank $R'(i)\le 4$, because $\txfee \geq \txfee_{\capsum_\ell + 1}$.}
    \label{fig:uicproof}
\end{figure}
In this example, $i =2$ and $\capsum_\ell = 4$.
After user $i$ changes its bid such that $\txfee \ge \txfee_{\capsum_\ell + 1}$, tasks with rank $5$ and $6$ will still have the same rank, so $R'(i)$ can only be in $[1, 4]$. In general, it follows that $R'(i) \le \capsum_\ell$.

Secondly, we prove that $\ell' = \ell$, by showing that $\ell' \geq \ell$ and $\ell' \leq \ell$.
To see that $\ell' \geq \ell$, recall that $\ell \overset{\text{def}}{=} \underset{j}{\arg\max} \{ \cost_{j + 1} \leq \txfee_{\capsum_j + 1} \}$, and $\ell' \overset{\text{def}}{=} \underset{j}{\arg\max} \{ \cost'_{j + 1} \leq \txfee'_{\capsum'_j + 1} \}$.
As argued previously, $\txfee'_{\capsum_\ell + 1} = \txfee_{\capsum_\ell + 1}$ and $R'(i) \le \capsum_\ell$ (c.f.~\cref{fig:uicproof}).
In addition, note that $\cost'_{\ell+1} = \cost_{\ell + 1}$ and $\capsum'_{\ell} = \capsum_{\ell}$ since prover bids are not changed, indicating $\txfee'_{\capsum_\ell + 1} = \txfee'_{\capsum'_\ell + 1}$.
It follows then 
$
\cost'_{\ell + 1} = \cost_{\ell + 1} \leq \txfee_{\capsum_\ell + 1} = \txfee'_{\capsum_\ell + 1} = \txfee'_{\capsum'_\ell + 1}
$. Since $\ell'$ is the maximum of all such $\ell$'s, we have $\ell' \geq \ell$.

To see $\ell' \leq \ell$, consider any $\ell^\ast > \ell$.
We have $\cost_{\ell^\ast + 1} > \txfee_{\capsum_{\ell^\ast} + 1}$ by the definition of $\ell$.
Note that $f'_{\capsum_{\ell^\ast} + 1} = f_{\capsum_{\ell^\ast} + 1}$, since $\capsum_{\ell^\ast} + 1 > \capsum_{\ell}$ by the definition of $\capsum$, and that tasks with a rank greater than $\capsum_\ell$ will have the same rank after the deviation (as argued previously).
Further recall that $\cost'_{\ell^\ast+1} = \cost_{\ell^\ast + 1}$ and $\capsum_{\ell^\ast} = \capsum'_{\ell^\ast}$ since prover bids are not changed.
It follows that
$
\cost'_{\ell^\ast + 1} = \cost_{\ell^\ast + 1} > \txfee_{\capsum_{\ell^\ast} + 1} = \txfee'_{\capsum_{\ell^\ast} + 1} = \txfee'_{\capsum'_{\ell^\ast} + 1}
$.
Since $\cost'_{\ell^\ast + 1} > \txfee'_{\capsum'_{\ell^\ast} + 1}$ for any $\ell^\ast > \ell$, $\ell$ must be at least maximum, i.e., $\ell'\leq \ell$.

In summary, $\ell' = \ell$.
We now can derive $\capsum'_{\ell'}$ the number of allocated tasks after the deviation: $\capsum'_{\ell'} = \capsum'_{\ell} = \capsum_{\ell}$.
Since $R'(i) \le \capsum_{\ell} = \capsum'_{\ell'}$ as argued previously, task $i$ remains allocated after the deviation.
The payment after deviating is $f'_{\capsum'_{\ell'} + 1} =f'_{\capsum_{\ell} + 1} = f_{\capsum_{\ell} + 1}$, as before. Thus, the utility of user $i$ remains the same.

The proof for case 1.2 (when $\txfee < \txfee_{\capsum_\ell + 1}$, user $i$'s task will not be allocated after the user changes its bid) is similar to that for case 1.1; the proof for case 2 is more involved as it incorporates more sub-cases.
\end{proof}

\parhead{Prover incentive compatibility}
Considering the prover side, recall that a ZKP market is PDSIC if the mechanism is DSIC for the provers to bid their costs honestly.
We show that the \tool mechanism is PDSIC.

\begin{proposition}
\label{prop:new-dpic}
The \tool mechanism is PDSIC.
\end{proposition}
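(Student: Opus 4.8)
The plan is to show that for any prover $k$, deviating from its true cost bid $\cost_k$ to some $\cost$ cannot strictly increase its utility, holding all other bids fixed. As in the UDSIC proof, I would use primed variables for quantities after $k$'s deviation. The prover's utility is $\capacity_k \cdot (\cost_{\ell+1} - \cost_k)$ when allocated (indices $k \in [1,\ell]$) and $0$ otherwise; note crucially that the payment rate $\cost_{\ell+1}$ to allocated provers does not depend on $\cost_k$ itself when $k \le \ell$, only on the identity of the first unallocated prover — this is the standard ``second-price'' structure. So the only way $k$ can gain is by changing \emph{whether} it is allocated, or by changing \emph{which} prover sits at position $\ell+1$, or by changing its own rank in the sorted cost order. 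The first step is therefore to observe that a deviation which keeps $k$'s position among the sorted provers unchanged and keeps $\ell$ unchanged yields identical utility; so I only need to examine deviations that move $k$ past other provers in the cost ordering or that change $\ell$.

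I would split into two main cases mirroring the UDSIC argument. \textbf{Case 1: $k$ is allocated honestly}, i.e. $k \in [1,\ell]$, so $\cost_k \le \cost_\ell \le \cost_{\ell+1}$ and utility is $\capacity_k(\cost_{\ell+1}-\cost_k) \ge 0$. Bidding lower, $\cost < \cost_k$, only moves $k$ earlier in the sorted order; I would argue the set of feasible provers and the threshold $\ell$ are essentially preserved (the capacity prefix sums up to position $\ell$ only get reordered among already-allocated provers, and the feasibility comparison $\cost'_{j+1} \le \txfee_{\capsum'_j+1}$ at the boundary is unchanged because $\capsum'_\ell = \capsum_\ell$), so $k$ stays allocated, $\cost'_{\ell+1} = \cost_{\ell+1}$, and utility is unchanged. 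Bidding higher, $\cost > \cost_k$, risks pushing $k$ to a later position: either $k$ remains in the first $\ell'$ slots with the same payment rate (unchanged utility), or $k$ falls out of the allocation and gets $0 \le$ honest utility, or $k$ becomes the prover at position $\ell'+1$ — but then it is unallocated with utility $0$. The one subtlety is that raising $\cost$ could in principle \emph{shrink} $\ell$ (make $k$ or a later prover infeasible); I would note this can only hurt $k$. \textbf{Case 2: $k$ is not allocated honestly}, i.e. $k \ge \ell+1$, with honest utility $0$. Here $k$ would want to bid lower to get allocated. If $k$ lowers its bid to enter the first $\ell'$ provers, I must show the resulting payment rate $\cost'_{\ell'+1}$ satisfies $\cost'_{\ell'+1} \le \cost_k$, so that $k$'s utility $\capacity_k(\cost'_{\ell'+1}-\cost_k) \le 0$. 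This is the heart of the argument: lowering $k$'s bid can only make the new marginal (first unallocated) prover's cost no larger than what $k$'s true cost $\cost_k$ would have been as a candidate, because $k$ displaces a prover that had cost $\ge$ the old $\cost_{\ell+1}$ region, and the feasibility cutoff $\ell$ can only move in a controlled way. I would make this precise by comparing the sorted cost sequences with and without $k$'s true value in play.

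The main obstacle I expect is \textbf{Case 2}, specifically controlling how the feasibility index $\ell$ changes when $k$ lowers its bid: inserting a cheaper prover shifts the capacity prefix sums $\capsum_j$ for all $j$ past $k$'s new position, which changes \emph{which} user fee each feasibility test $\cost_{j+1} \le \txfee_{\capsum_j+1}$ compares against, and the interaction with the $\capacity_j$ values (a cheap prover with large capacity could in principle be feasible where the displaced prover was not, enlarging $\ell$) needs a careful monotonicity argument. My approach would be to show that whatever new position $\ell'+1$ the marginal prover lands in, its cost is sandwiched: $\cost'_{\ell'+1} \le \cost_k$ whenever $k$ has actually become allocated, using that for $k$ to be allocated we need $\cost = \cost'_{k\text{'s new index}}$ with that index $\le \ell'$, hence $\cost'_{\ell'+1} \ge \cost \ge$ nothing useful directly — so instead I compare to the honest run: the honest $\ell$ was the \emph{largest} feasible index with $k$ absent from the first $\ell$ slots, and I show any feasible index in the deviated run that includes $k$ forces the marginal cost up to at least $\cost_{\ell+1} \ge \cost_k$ is false in general, so the correct framing is that $\cost'_{\ell'+1}$ is at least the $(\ell'+1)$-th smallest among $\{\cost_j\}_{j\ne k}\cup\{\cost\}$, and one shows this exceeds $\cost_k$ exactly when $k$ is allocated. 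I would finish by noting, as the paper itself flags, that PDSIC does \emph{not} preclude capacity misreporting, so no claim about the $\capacity_k$ bid is needed here. Given the length, in the paper body I would present Case 1 in full and defer the detailed Case 2 bookkeeping to \cref{sec:proof}, paralleling the treatment of \cref{prop:new-duic}.
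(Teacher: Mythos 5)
Your overall architecture matches the paper's: case-split on whether the prover is allocated under honest bidding, show in the allocated case that the payment rate is preserved or the prover drops out, and show in the unallocated case that any deviation which wins an allocation pays at most the true cost. Your Case 1 is essentially the paper's argument (the paper splits on $\cost \le \cost_{\ell+1}$ versus $\cost > \cost_{\ell+1}$ rather than on the direction of the deviation relative to $\cost_j$, and proves the ``falls out of the allocation'' branch by contradiction, but the content is the same).

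The genuine gap is in your Case 2, and you half-acknowledge it. You correctly identify the goal --- if the deviating prover $j \ge \ell+1$ becomes allocated, you must show $\cost'_{\ell'+1} \le \cost_j$ --- but your sketch never reaches it, and your closing sentence (``one shows this \emph{exceeds} $\cost_k$ exactly when $k$ is allocated'') asserts the opposite inequality, which would make the deviation profitable. The missing observation, which is the crux of the paper's proof, is that lowering the bid pins $\ell' < j$: since $\cost < \cost_j \le \cost_{\ell^\ast+1}$ for all $\ell^\ast \ge j$, every prover at a sorted position $\ge j$ keeps both its cost and its prefix capacity sum, i.e.\ $\cost'_{\ell^\ast+1} = \cost_{\ell^\ast+1}$ and $\capsum'_{\ell^\ast} = \capsum_{\ell^\ast}$ for all $\ell^\ast \ge j$, so the feasibility test $\cost'_{\ell^\ast+1} \le \txfee_{\capsum'_{\ell^\ast}+1}$ still fails there and $\ell' < j$ by definition of $\ell'$. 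This sidesteps entirely the capacity-reshuffling worry you raise (which is real only for positions \emph{strictly below} $j$, and irrelevant once $\ell' < j$ is established). From $\ell'+1 \le j$ and the fact that replacing $\cost_j$ by a smaller value cannot increase any order statistic, one gets $\cost'_{\ell'+1} \le \cost'_j \le \cost_j$, hence utility at most zero. Without this step your Case 2 does not close. (Your Case 1 ``bidding higher'' branch also quietly needs its own sub-split at the threshold $\cost_{\ell+1}$ to justify that the payment rate is unchanged whenever $j$ stays allocated, but that is a sketch-level omission rather than a wrong turn.)
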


\begin{proof}
We show that it is not profitable for any prover to bid other than its valuation.
Recall that the users' bids satisfy $\txfee_1 \geq \cdots \geq \txfee_\numuser$, and the provers' bids satisfy $\cost_1 \leq \cdots \leq \cost_\numprover$.
We assume that the prover $j$ changes its bid from $\cost_j$ to $\cost$, and suppose
after the bid change, the new user bids are $\txfee'_1 \geq \cdots \geq \txfee'_\numuser$, and the new prover bids are $\cost'_1 \leq \cdots \leq \cost'_\numprover$.
(All notations with primes denote the variables after prover $j$ changes its bid, but note that the user bids are not changed, i.e., $\txfee'_i = \txfee_i$ for all $i = 1, \cdots, \numuser$.)
Let $\ell'$ be the number of allocated provers is and $\capsum'_{\ell'}$ be the number of allocated tasks.

A prover is said to have {\em rank $j$} if its cost is the $j$-th lowest among all provers.
We denote the rank of prover $j$ before and after changing its bid with $R(j)$ and $R'(j)$.
We now discuss the utility change of prover $j$ in two cases: when the prover is allocated before the change (i.e., $R(j) \le \ell$), and when it is not.

\textbf{Case 1: Prover $j$ is allocated before changing the bid.} 
By the assumption that prover $j$ is allocated, the cost $\cost_j$ is at most $\cost_{\ell + 1}$, i.e., $\cost_j \le \cost_{\ell + 1}$.
By the mechanism we have $\cost_{\ell + 1} \le \txfee_{\capsum_\ell + 1}$.
The proof proceeds in two cases depending on whether $\cost > \cost_{\ell + 1}$, i.e., the cost $\cost$ is greater than the ``second-price'' of the provers.
If $\cost \le \cost_{\ell + 1}$, we show that prover $j$ is still allocated, and the second price does not change.
Otherwise, we prover that prover $j$ is not allocated.
In either case, prover $j$'s utility does not increase.

\textbf{Case 1.1: $\cost \leq \cost_{\ell + 1}$.}
We now show that in this case, prover $j$ will still be allocated and the payment $\capacity_j \cdot \cost'_{\ell' + 1}$ remains the same as $\capacity_j \cdot \cost_{\ell + 1}$.

First, we show that $R'(j) \le \ell$.
By assumption, we have $\cost \leq \cost_{\ell + 1}$ and $\cost_j \leq \cost_{\ell + 1}$, thus provers with rank greater than $\ell$ before prover $j$ changes its bid will have the same rank after it changes its bid.
It follows that $R'(j) \le \ell$.

Second, we show that $\ell' = \ell$, as $\ell' \geq \ell$ and $\ell' \leq \ell$.
Recall that by definition, $\ell = \underset{k}{\arg\max} \{ \cost_{k + 1} \leq \txfee_{\capsum_k + 1} \}$, and $\ell' = \underset{k}{\arg\max} \{ \cost'_{k + 1} \leq \txfee'_{\capsum'_k + 1} \}$.
To see that $\ell' \geq \ell$, note that $\capsum'_{\ell} = \capsum_\ell$ and $\cost'_{\ell + 1} = \cost_{\ell + 1}$, because the provers with rank greater than $\ell$ before prover $j$ changes its bid will have the same rank after prover $j$ changes its bid (as argued previously).
Further recall that $\txfee'_{\capsum_\ell} = \txfee_{\capsum_\ell}$ since user bids are not changed.
It follows that
$
\cost'_{\ell + 1} = \cost_{\ell + 1} \leq \txfee_{\capsum_\ell + 1} = \txfee'_{\capsum_\ell + 1} = \txfee'_{\capsum'_\ell + 1}
$.
So, $\ell' \geq \ell$ by the definition of $\ell'$.

To see $\ell' \leq \ell$, consider any $\ell^\ast > \ell$, and $\cost_{\ell^\ast + 1} > \txfee_{\capsum_{\ell^\ast} + 1}$ by the definition of $\ell$.
Because the provers with rank greater than $\ell$ before prover $j$ changes its bid will have the same rank after prover $j$ changes its bid (as argued previously), $\cost'_{\ell^\ast + 1} = \cost_{\ell^\ast + 1}$ and $\capsum'_{\ell^\ast} = \capsum_{\ell^\ast}$.
Note that $\txfee'_{\capsum_{\ell^\ast} + 1} = \txfee_{\capsum_{\ell^\ast} + 1}$ since user bids are not changed.
It follows that
$
\cost'_{\ell^\ast + 1} = \cost_{\ell^\ast + 1} > \txfee_{\capsum_{\ell^\ast} + 1} = \txfee'_{\capsum_{\ell^\ast} + 1} = \txfee'_{\capsum'_{\ell^\ast} + 1}
$.
Since $\cost'_{\ell^\ast + 1} > \txfee'_{\capsum'_{\ell^\ast} + 1}$ for any $\ell^\ast > \ell$, $\ell' \leq \ell$ by the definition of $\ell'$.
In summary, $\ell' = \ell$.

We now have the allocation after prover $j$ changes its bid: $\cost'_{\ell' + 1} = \cost'_{\ell + 1} = \cost_{\ell + 1}$.
Therefore, prover $j$ will still be allocated.
Moreover, since the payment to of prover $j$ is $\cost'_{\ell'+ 1} \cdot \capacity_j = \cost_{\ell + 1} \cdot \capacity_j$, the utility of prover $j$ will remain the same.

The proof of case 1.2 (when $\cost > \cost_{\ell + 1}$, prover $j$ will not be allocated after it changes its bid) and the proof of case 2 are more involved.
\end{proof}

\section{Implementing \texorpdfstring{\tool}{ϕ} in an ZKP Market}
\label{sec:system-level-design}
In the previous section, we have shown that \tool has desirable game-theoretic properties.
However, the core mechanism alone leaves certain security threats open.
In this section, we present system-level designs to mitigate them.
As before, missing proofs are given in \cref{sec:proof}.

\subsection{Missing or Incorrect ZKPs}
\label{subsec:missing-zkp}
It is possible that an allocated prover does not generate the required ZKPs.
To address this concern, we require each prover to deposit collateral when joining the protocol. The collateral is used to refund users whose tasks were not generated on time.
{
Specifically, suppose the market designer sets a public ``refund'' limit for user tasks that are not completed, up to a value of $\bar{\cost}$.
This effectively serves as an upper bound on user valuations: if the refund $\bar{\cost}$ seems too low to some users, they can choose not to send their task to the ZKP market.}
Then, a prover with capacity $s$ needs to maintain a deposit of at least $\bar{\cost} \cdot \capacity$ when it bids $(\capacity, \cost)$.
Note that $\cost < \bar{\cost}$.
If the prover does not finish the allocated task in time, its deposit will be confiscated (causing it to lose $\capacity \cdot \bar{\cost}$ utility) and used to refund $\bar{\cost}$ to each affected user, which is no less than the fees they paid.
{
We note that this way of setting the collateral matches industry-adopted practices such as~\cite{roy2024succinct}.}

\subsection{Misreporting Capacity}
\Cref{prop:new-dpic} proved that provers are incentivized to bid their costs honestly.
However, a prover may misreport its \textit{capacity}, as demonstrated in~\cref{eg:new-not-pic}.
\begin{example}
\label{eg:new-not-pic}
We continue with~\cref{eg:brief-example-intro}, where
when all parties bid honestly, $\ell = 1$ and prover $1$ gets $(1 - 0) \times 4 = 4$ utility.
However, when prover $1$ bids capacity $1$ instead of $4$, $\ell = 2$ and prover $1$ gets $(10 - 0) \times 1 = 10$ utility. The allocation is shown pictorially as follows.

\begin{center}
\includegraphics[width=0.9\linewidth,page=2]{figure/proofee-crop.pdf}
\end{center}

\end{example}

This example suggests that reporting a lower capacity in some cases can render more provers feasible, increasing the prover's payment.
Worst yet, fewer users are allocated in the market, and those allocated must pay more.
However, in~\cref{prop:capacity-misreport-fail}, we observe that the above attack is profitable only under specific market conditions.
The mitigation we propose is based on this observation.
\begin{proposition}
\label{prop:capacity-misreport-fail}
    When a prover $j < N$ submits a smaller capacity bid than its true capacity,
    there will always be some prover and user bids that result in the prover receiving lower utility compared to if it had bid honestly, assuming all other provers are bidding honestly.
\end{proposition}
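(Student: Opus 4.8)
The statement is purely existential: for an arbitrary prover $j<\numprover$ and an arbitrary under-report $s<\capacity_j$ I only need to exhibit \emph{one} profile of the other (honest) prover bids and of all user bids on which the deviation strictly lowers prover $j$'s payoff. My plan is to engineer a ``knife-edge'' market in which the cutoff $\ell$ and both prices are completely insensitive to whatever prover $j$ reports in the range $[1,\capacity_j]$: on such a market, shrinking the reported capacity merely shrinks the set of tasks prover $j$ is allocated, at an unchanged and strictly positive per-task margin, so the deviation can only hurt. Throughout I keep prover $j$'s cost honest, since the claim concerns a capacity-only deviation.

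\textbf{The configuration I would build.} Pick a number $c>\cost_j$. Give provers $1,\dots,j-1$ capacity $1$ and pairwise-distinct costs $\cost_1<\dots<\cost_{j-1}<\cost_j<c$ (so all ranks are unambiguous); keep prover $j$ at its true type $(\capacity_j,\cost_j)$, assuming $\capacity_j\ge 2$ (the case $\capacity_j=1$ is immediate, since an under-report then means reporting capacity $0$ and proving nothing, for utility $\le 0$); let prover $j+1$ have cost exactly $c$ --- this is the only place the hypothesis $j<\numprover$ is used, as prover $j+1$ must exist; and give provers $j+2,\dots,\numprover$ (if any) a cost $c'>c$. Finally take sufficiently many users (say $\numuser=\capsum_j+\capacity_j$), each with value exactly $c$, so that $\txfee_i=c$ at every index the mechanism can reference.

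\textbf{Checking the two outcomes.} First I would compute the honest run: the feasibility test at $k=j$ reads $\cost_{j+1}=c\le\txfee_{\capsum_j+1}=c$, which holds, while at every $k\ge j+1$ it reduces to $c'\le c$ (or is vacuous at $k=\numprover$) and fails, so $\ell=j$; hence provers $1,\dots,j$ are allocated in full, and prover $j$ obtains $\capacity_j$ tasks at price $\cost_{\ell+1}=c$, for utility $\capacity_j(c-\cost_j)>0$. Then I would run the deviation $(s,\cost_j)$ with $1\le s<\capacity_j$: prover costs and ranks are unchanged, and only the prefix sums $\capsum'_k=\capsum_k-(\capacity_j-s)$ for $k\ge j$ drop; since sorted fees are non-increasing and the relevant index only moves left while staying valid, the test at $k=j$ still reads $c\le\txfee'_{\capsum'_j+1}=c$ and those at $k\ge j+1$ still read $c'\le c$, so $\ell'=j=\ell$, prover $j$ is still allocated, still paid $c$ per task, but now receives only $s$ tasks. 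Comparing, $s(c-\cost_j)<\capacity_j(c-\cost_j)$, which is exactly the claim.

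\textbf{Main obstacle.} Everything hinges on making the cutoff $\ell$ and the prices immovable under prover $j$'s capacity reduction --- precisely the lever that makes the deviation of \cref{eg:new-not-pic} pay off. Two design choices carry this and form the delicate part of the argument: tuning the user population so that the $k=j$ feasibility inequality sits exactly on its boundary and still holds after the indices shift left (which uses both that the mechanism's test is non-strict, ``$\le$'', and that sorted fees are monotone), and making provers $j+2,\dots,\numprover$ costly enough that $\ell$ cannot instead jump \emph{up} and hand prover $j$ a higher price. The remaining details --- unambiguous tie-breaking among the cheap provers, the $\capacity_j=1$ corner, and checking $\capsum'_j+1$ remains a legal index --- are routine bookkeeping.
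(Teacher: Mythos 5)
Your proof is correct and takes essentially the same route as the paper's: both construct a market in which every user fee equals the cost of prover $j+1$ (a value strictly above $\cost_j$) with enough tasks to saturate all capacity, so that $\ell=j$ is unchanged by the under-report and prover $j$'s utility scales linearly in its reported capacity at a fixed positive per-task margin. Your write-up is merely more explicit about the surrounding prover bids, tie-breaking, the $\capacity_j=1$ corner, and index validity, details the paper's two-line argument leaves implicit.
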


\parhead{Mitigation: Fixing capacity}
As the profitability of the above attack depends on user bids, one mitigation is to have provers submit their capacity bids before users.
The uncertainty of user bids creates a risk for provers.
Another approach is to fix prover capacity for several rounds, to prevent provers from strategically adjusting capacities based on market conditions.
In practice, a prover's capacity is unlikely to change drastically in a short amount of time.

\subsection{User Sybil Proofness}
We show in~\cref{eg:new-user-sybil} the core mechanism alone is not user Sybil proof, i.e., provers can profit by creating fake user bids.

\begin{example}
\label{eg:new-user-sybil}
Suppose there are $8$ user tasks with values $(10,10,10,2,2,2,2,2)$, and $3$ provers with capacities $\capacity_i$ and costs $\cost_i$ $(\capacity_1, \cost_1) = (4, 0), (\capacity_2, \cost_2) = (2, 2), (\capacity_3, \cost_3) = (2, 9)$.
When all parties bid honestly, we have $\ell = 1$ and prover $1$ can make $(2 -0) \times 4 = 8$.
However, prover $1$ can create $4$ Sybil tasks with a bid of $9$.
Then we have $\ell = 2$, and prover 1 will make $\underbrace{(9 - 0) \times 4}_\text{prover reward} - \underbrace{9 \times 3}_\text{task cost} = 9$.
\begin{center}
\includegraphics[width=0.7\linewidth,page=3]{figure/proofee-crop.pdf}
\end{center}
\end{example}

Intuitively, creating high-paying tasks can sometimes render more provers feasible, increasing all provers' utility. This can also reduce user utility.
We observe that creating user Sybils benefits the attacker only under specific conditions and that the attack is possible only if the malicious prover knows all user bids.

\begin{proposition}
\label{prop:prover-sybil-user}
    When a prover $j < N$ submits Sybil user bids, there will always be some prover and user bids that result in prover $j$ receiving lower utility compared to if it had acted honestly, assuming all other provers are honest.
\end{proposition}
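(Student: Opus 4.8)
The plan is to prove \cref{prop:prover-sybil-user} the same way \cref{prop:capacity-misreport-fail} is handled: rather than reasoning over all bid profiles, I will construct, for any fixed Sybil strategy of prover $j$, one explicit profile of the remaining users' and provers' bids under which the Sybil tasks strictly hurt prover $j$. Fix a Sybil strategy in which prover $j$ submits $k \ge 1$ fake user tasks with fees $g_1 \ge \cdots \ge g_k$, and assume $g_k > 0$ (a purely zero-fee Sybil strategy is degenerate: it is never pivotal in the profile below and cannot raise prover $j$'s utility in any profile). Set $\delta := \tfrac12 \min_i g_i > 0$. The adversarial profile is: $N = j+1$ provers, where provers $1,\dots,j$ all have unit cost $0$ with $\capacity_1 = \cdots = \capacity_{j-1} = 1$ and $\capacity_j = S$ for some $S > k$, and prover $N$ has $\capacity_N = 1$ and $\cost_N = \delta/2$; and $\capsum_j + 1 = j + S$ real users, one with a huge fee $W$ and the remaining $\capsum_j = (j-1)+S$ with fee $\delta$. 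Appending further high-cost dummy provers makes $N$ arbitrary without affecting anything, so it suffices to treat $N = j+1$.

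First I would pin down the honest outcome. After sorting, position $1$ carries $W$ and positions $2,\dots,\capsum_j+1$ carry $\delta$, so $\cost_{j+1} = \delta/2 \le \delta = \txfee_{\capsum_j+1}$, while always $\ell \le N-1 = j$; hence $\ell = j$, provers $1,\dots,j$ are allocated in full, the prover-side price is $\cost_{\ell+1} = \delta/2$, and prover $j$ earns $S\cdot(\delta/2) - S\cdot 0 = S\delta/2 > 0$.

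Next I would analyze the Sybil outcome. Since every $g_i > \delta$, inserting the $k$ Sybil tasks places them at positions $2,\dots,k+1$, so positions $k+2$ and beyond — in particular position $\capsum_j+1 = j+S$, which is at least $k+2$ because $S > k$ and $j \ge 1$ — are still occupied by $\delta$-fee real tasks. The prover bids are unchanged, so $\cost'_{j+1} = \delta/2 \le \delta = \txfee'_{\capsum_j+1}$ and again $\ell' = j$. Hence prover $j$ is still allocated its full capacity $S$ at the unchanged prover-side price $\delta/2$: its prover-side reward and its proving cost are exactly as in the honest outcome. But all $k$ Sybil tasks now fall within the first $\capsum_{\ell'} = (j-1)+S$ allocated positions, so each one is selected, and prover $j$ — who owns these fake users — is charged the user-side price $\txfee'_{\capsum_{\ell'}+1} = \delta$ for each. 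Its utility drops to $S\delta/2 - k\delta < S\delta/2$, strictly below the honest utility (and, choosing $S$ with $k < S < 2k$, even negative, matching the ``loss'' claim in the surrounding text).

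The main obstacle is the claim in the third step that $\ell$ does not increase: one naturally worries that the attacker can take the Sybil fees $g_i$ arbitrarily large and thereby make many more provers feasible. The observation that kills this worry is that $k$ Sybil tasks can lift the sorted fee vector at only $k$ positions, so taking prover $j$'s honest capacity $S$ large enough relative to $k$ forces the position $\capsum_j+1$ governing feasibility of prover $j+1$ to remain a low-fee real task. Once $\ell' = \ell$ is established, the rest is bookkeeping: the prover-side reward and proving cost are untouched, while prover $j$ pays the strictly positive price $\delta$ for each of its own $k$ allocated Sybil tasks. I would close by noting that the other provers bid honestly throughout (as required by the statement), and that — exactly as for \cref{prop:capacity-misreport-fail} — this shows such Sybil attacks can pay off only if prover $j$ knows enough about the other parties' bids to rule out profiles of this form, which motivates the sealed-bid mitigation.
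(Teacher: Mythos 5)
Your proof is correct and, like the paper's, proceeds by exhibiting a single adversarial bid profile on which the Sybil strategy strictly backfires; the difference lies in how the profile is built and how carefully the pivotal step is justified. The paper fixes one profile essentially independent of the attack --- $\capsum_N$ tasks all with fee equal to $\cost_{j+1}$ --- and then simply asserts that after the Sybil bids prover $j$ ``is always allocated, and will receive the same payment,'' losing at least $\txfee$ per allocated Sybil task; the claim that $\ell$ (and hence the prover-side price) is unchanged is not argued there, and for Sybil strategies with many high-fee fakes it genuinely needs the kind of justification you supply. Your construction is instead tailored to the given Sybil strategy: you set the real fees $\delta$ strictly below every Sybil fee so that all $k$ fakes are guaranteed to be allocated and charged, and you inflate prover $j$'s capacity $S$ above $k$ so that the pivotal position $\capsum_j+1$ provably remains a low-fee real task and $\ell'=\ell$ is forced rather than assumed. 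This buys a fully rigorous counterexample at the cost of a profile that depends on the attack (which is fine for the quantifier order in the statement --- for every Sybil strategy there exist bids), and it yields the sharper conclusion that the prover's utility can even be driven negative by taking $k<S<2k$. The one loose end, which the paper's proof shares, is the degenerate case of Sybil fees so low that no fake is ever allocated, where the utility is merely unchanged rather than strictly lower; your explicit side remark disposing of zero-fee Sybils is an adequate way to handle it, and your closing observation that profitability requires knowledge of others' bids matches the paper's intended use of the proposition to motivate sealed bids.
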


\parhead{Mitigation: hiding bids} Based on this observation, the above attacks can be mitigated by \textit{encrypting all user and prover bids} to hide the necessary information required by profitable Sybil attacks, i.e., making the auctions sealed-bid.

\begin{remark}
If provers are risk-averse and thus would try to avoid worst-case outcomes (as assumed by prior work, see \cref{rem:risk}), \cref{prop:prover-sybil-user} suffices to thwart the threat of a deviating prover sending Sybil user bids.
One may consider a Bayesian setting, where different solutions may be needed to ensure that such deviations lower an attacker's \emph{expected} utility (see~\cref{sec:conclusions}).
\end{remark}

\subsection{Prover Sybil proofness}
Recall that a mechanism is prover Sybil proof if no prover finds it profitable to submit bids as multiple provers.
We show that the \tool mechanism is not prover Sybil proof, as demonstrated in~\cref{eg:new-prover-sybil}.
Intuitively, prover $\ell$ may split into multiple provers and get allocated some proof generation capacity.

\begin{example}
\label{eg:new-prover-sybil}
Suppose there are $8$ user tasks, where $3$ of them have value $10$, and $5$ of them have value $3$.
Also, suppose there are $3$ provers in the ZKP market.
The capacities $\capacity_i$ and costs $\cost_i$ of the provers are $(\capacity_1, \cost_1) = (4, 0), (\capacity_2, \cost_2) = (2, 2), (\capacity_3, \cost_3) = (2, 9)$.
When all parties are bidding honestly, we have $\ell = 1$, and prover $2$ will get zero utility.
Instead, prover $2$ can split into two provers, one with bid $(1, 2)$ and the other with bid $(1, 3)$.
Then $\ell$ will increase to $2$, and the split prover with bid $(1, 2)$ will be allocated and get $1$ utility.

\begin{center}
\includegraphics[width=0.7\linewidth,page=4]{figure/proofee-crop.pdf}
\end{center}
\aviv{Could add a caption and reference the image.}
\end{example}

We analyze prover Sybil attacks and show in \cref{prop:sybil-good} that profitable attacks contribute to social welfare and thus do not harm users.
\begin{proposition}
\label{prop:sybil-good}
    Suppose a prover $j$ with capacity $\capacity_j$ splits into $p$ Sybil provers, where the $k$-th split has cost bid $\cost_{j, k}$ and capacity bid $\capacity_{j, k}$, with $\sum_{k}\capacity_{j, k} = \capacity_j$.
    Then if the Sybil attack is profitable, the social welfare will increase.
\end{proposition}

\subsection{Collusion}
A mechanism is vulnerable to collusion if a coalition of provers and/or users finds it profitable to coordinate and bid strategically.
In this paper, we focus on two types of collusion in a ZKP market; one is a prover colluding with a user, and the other is a set of provers colluding with each other.
{We note that potential collusion among market agents is typically not discussed in related work~\cite{gong2024v3rified,huang2002design}.
Here, we consider the possibility of collusion and show that it is not much of a threat with proper implementation of the mechanism.
Specifically, we assume a coalition of colluding participants can freely communicate; i.e., colluders have private communication channels with each other, while honest actors not in the coalition are independent and do not communicate with any other party.}

In case a prover colludes with a user, we show in \cref{prop:collusion-prover-user} that this benefits the attacking coalition only under specific conditions.
\begin{proposition}
\label{prop:collusion-prover-user}
    When a prover and a user collude by deviating from their honest bids, there will always be other prover and user bids that cause the coalition to receive lower joint utility, assuming all other parties are honest.
\end{proposition}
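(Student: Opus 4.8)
The plan is to show that a prover–user coalition can never deviate in a \emph{risk-free} way: for every joint deviation there is at least one completion of the remaining (honest) bids under which the coalition's total utility is strictly smaller than under honest bidding. This mirrors the adversarial-construction template already used for \cref{prop:capacity-misreport-fail} and \cref{prop:prover-sybil-user}, except that the quantity to be driven down is now the \emph{sum} of the colluding prover's and user's utilities.

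Setup: let the coalition consist of prover $j$ with true bid $(\capacity_j,\cost_j)$ and user $i$ with true value $\txfee_i$, and suppose they submit $(\capacity,\cost)$ and $\txfee$, differing from the honest pair in at least one coordinate. Write the coalition's joint utility as $u_j+u_i$, where $u_j=\capacity'_j\,(\cost'_{\ell'+1}-\cost_j)$ if prover $j$ is among the first $\ell'$ provers (here $\capacity'_j$ is the capacity it is allocated and paid for, while its true per-task cost stays $\cost_j$) and $u_j=0$ otherwise, and $u_i=\txfee_i-\txfee'_{\capsum'_{\ell'}+1}$ if user $i$'s task lies within the first $\capsum'_{\ell'}$ slots and $u_i=0$ otherwise. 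It suffices to exhibit, for each such deviation, bids for the other $\numuser-1$ users and $\numprover-1$ provers that make $u_j+u_i$ strictly smaller than under honest play.

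Core construction and case split: I would start from a tunable completion in which, under honest bidding, (i) the other provers all have costs above $\cost_j$ by a controllable gap, so prover $j$ is feasible and sits strictly inside the first $\ell$ provers, hence allocated at its full true capacity and paid per-unit price $\cost_{\ell+1}$; and (ii) user $i$'s task lies strictly inside the first $\capsum_\ell$ slots at clearing fee $\txfee_{\capsum_\ell+1}$ well below $\txfee_i$. This gives a strictly positive honest baseline whose size I control. I then case on which coordinate the coalition perturbs and read off the punishing completion from the corresponding earlier argument. If only the prover's cost changes, \cref{prop:new-dpic} already forces $u_j$ weakly down while the gaps can be set so user $i$'s slot and clearing price are untouched, so the sum strictly drops for any nontrivial change. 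If the prover trims its capacity (the lever of \cref{eg:new-not-pic}), I reuse the completion behind \cref{prop:capacity-misreport-fail} but place user $i$ exactly at the margin of the first $\capsum_\ell$ slots, so the resulting shrink of $\capsum_\ell$ evicts user $i$ and the loss $\txfee_i-\txfee_{\capsum_\ell+1}$ is sized to exceed any per-unit gain the prover could obtain. If the user inflates $\txfee$ (the lever of \cref{eg:new-user-sybil}, now internal to the coalition), I make the cheapest currently non-allocated prover strictly cheaper than prover $j$, so that any increase in $\ell$ benefits \emph{that} prover rather than prover $j$, leaving the user's inflated payment as pure loss; alternatively I arrange for user $i$ itself to be the bidder setting the clearing fee, so it pays its own inflated bid. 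If the user deflates $\txfee$, I choose the completion so that user $i$ drops out of the first $\capsum_\ell$ slots while the prover's payment is unchanged. For a combined deviation I tune the gaps so that whichever coordinates are perturbed, the induced change in $\ell$, in $\capsum_\ell$, or in the two clearing prices moves at least one of $u_j,u_i$ down by more than the other can rise.

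Main obstacle: the delicate case is the coordinated move in which the prover's capacity/cost manipulation and the user's fee manipulation are meant to reinforce one another — e.g., the user raises $\txfee$ just enough to make prover $j$ feasible and keep itself allocated, while the prover trims capacity to lift its own per-unit price without pushing $\capsum_\ell$ below user $i$'s slot. I expect to dispatch this by showing that the set of other-party bid profiles on which such a joint move is simultaneously profitable for \emph{both} members is a knife-edge (boundary) set in bid space, so its complement contains a profile on which the coalition's joint utility strictly decreases; this is exactly the ``the colluders would need full information about the other bids'' conclusion asserted in \cref{prop:collusion-prover-user}, and it is what the sealed-bid mitigation then exploits.
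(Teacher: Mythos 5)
Your overall template---for each deviation, exhibit a completion of the honest parties' bids under which the coalition's joint utility strictly drops---is exactly the paper's strategy, and your case split by which coordinate is perturbed roughly tracks the paper's split on $\txfee'$ versus $\txfee$ and $\cost'$ versus $\cost$. However, two of your steps do not go through as written. First, in the cost-only case you argue that \cref{prop:new-dpic} forces $u_j$ ``weakly down'' and conclude the sum ``strictly drops for any nontrivial change''; weak non-improvement plus an unchanged $u_i$ gives only a weak inequality, not the strict decrease the proposition requires. The paper closes this by constructing a tailored environment---an extra prover with cost $(\cost+\cost')/2$ and capacity $\capacity$ together with $2\capacity$ tasks at fee $(\cost+\cost')/2$---in which \emph{any} $\cost'\neq\cost$ strictly changes the allocation or the clearing price against the coalition. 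Second, your treatment of the ``delicate'' combined deviation is not a proof: you assert that the set of other-party profiles on which the joint move is profitable is a knife-edge (boundary) set, so its complement contains a punishing profile. That premise is both unjustified and unnecessary---profitable regions in this mechanism are generally not measure-zero, and the proposition only needs the profitable set to be a proper subset, which must be shown by exhibiting an explicit punishing profile. The paper does this by first casing on the sign of $\txfee'-\txfee$ and fixing \emph{one} environment per case (e.g., for $\txfee'<\txfee$, two extra provers with costs $0$ and $(\txfee+\txfee')/2$ and $3\capacity$ tasks at fee $(\txfee+\txfee')/2$), then exhaustively sub-casing on where $\cost'$ falls relative to $\cost$ and the constructed fee level; every simultaneous perturbation lands in one of these sub-cases with a quantified strict loss. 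A minor further omission: you never address an over-reported capacity $\capacity'>\capacity$, which the paper dispatches via the slashing rule before restricting to $\capacity'\le\capacity$. To complete your argument you would need to replace the knife-edge claim and the PDSIC appeal with concrete midpoint-style constructions of the kind above.
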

In the proof, we show in a case-by-case manner that all profitable deviations require the coalition to know the bids of the other parties.

Following this result, the previous mitigation (hiding bids) also mitigates such collusion by creating uncertainty about profitability.
I.e., a prover-user coalition cannot be sure that they will profit by bidding strategically.

When a set of provers form a coalition and bid strategically, if the provers in the coalition have the same cost then this is equivalent to a single prover splitting into prover Sybils.
As in~\cref{prop:sybil-good}, such coalitions do not harm the ZKP market.
On the other hand, we show in~\cref{prop:collusion-two-prover} that two provers colluding can only benefit the provers in the coalition when the other parties have specific bids, which again is mitigated by encrypting user and prover bids.
\fanz{added the last bit about mitigation}

\begin{proposition}
\label{prop:collusion-two-prover}
    When two provers with different costs collude by changing their bids, there will always be other prover and user bids that result in the coalition receiving lower joint utility compared to if they had bid honestly, assuming all other parties are bidding honestly.
\end{proposition}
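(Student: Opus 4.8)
The plan is to mirror the template used for \cref{prop:capacity-misreport-fail,prop:prover-sybil-user,prop:collusion-prover-user}: fix an arbitrary collusive deviation and exhibit a configuration of the remaining, honest bids on which that deviation gives the coalition strictly smaller joint utility than honest bidding (so that honest play is not dominated). Denote the two colluding provers $a$ and $b$ with true types $(\capacity_a,\cost_a)$, $(\capacity_b,\cost_b)$, and assume w.l.o.g.\ $\cost_a<\cost_b$, which is exactly the hypothesis. A collusive strategy replaces their honest bids with arbitrary bids $(\capacity_a',\cost_a')$, $(\capacity_b',\cost_b')$; we may take $\capacity_a'\le\capacity_a$ and $\capacity_b'\le\capacity_b$, since over-reporting capacity only risks forfeiting the collateral $\bar{\cost}\cdot\capacity'$ of \cref{subsec:missing-zkp} and is trivially worse on some backgrounds. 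If the coalition also injects Sybil prover bids of cost $\cost_a$ or $\cost_b$, that portion is a pure split already handled by \cref{prop:sybil-good}; the new ingredient is the two named provers changing their own $(\capacity,\cost)$ bids, which is what I treat. Since only deviations that are profitable on \emph{some} background are interesting, it suffices to show each such deviation is strictly loss-making on \emph{another} background.

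I would split on the honest-bidding allocation status of the pair: (i) both $a$ and $b$ allocated; (ii) only $a$ allocated (``only $b$'' is impossible because $\cost_a<\cost_b$); (iii) neither allocated. Writing the coalition's joint utility as $\sum_{x}\capacity_x'(\cost_{\ell+1}-\cost_x)$ over the allocated $x\in\{a,b\}$, with the \emph{true} unit costs $\cost_x$ but the \emph{reported} capacities $\capacity_x'$, a deviation can help only through one of three levers: (a) inflating the price $\cost_{\ell+1}$ (the lowest unallocated prover cost) while keeping $a$, $b$ allocated; (b) enlarging $\ell$ by shrinking a reported capacity so that more feasibility thresholds $\txfee_{\capsum_j+1}$ are cleared; (c) lowering $\cost_b'$ to get $b$ allocated. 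For lever (a) I choose user fees with $\txfee_{\capsum_\ell+1}$ strictly below the inflated cost, and put an honest prover at rank $\ell$ equal to $a$; then $\ell$ drops, $a$ is de-allocated, and the coalition's utility falls to $0$, which is below the honest value $\capacity_a(\cost_{\ell+1}-\cost_a)$ made strictly positive by leaving a price gap. For lever (b) I choose user fees so that $\ell$ does not in fact grow; then the sole effect of the capacity cut is fewer allocated tasks at an unchanged price, a strict loss. For lever (c) I insert an honest prover with cost strictly between $\cost_b'$ and $\cost_b$ and user fees pushing the price into $(\cost_b',\cost_b)$; then $b$ is allocated but paid below its true cost, contributing a negative term large enough to swamp any gain on $a$. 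Case (iii) reduces to the same analysis: obtaining positive utility requires making $a$ or $b$ feasible \emph{and} price-covered, which needs knowledge of the user fees, and any move that underbids $\cost_a$ or $\cost_b$ is punished as in lever (c).

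The main obstacle is not any single inequality but the \emph{exhaustiveness and consistency} of this construction: one must check that every collusive bid pair falls under one of the three levers, including mixed moves (e.g.\ $a$ cutting capacity while $b$ raises cost) where the effects can partially cancel and must be bounded jointly, and one must verify that each constructed background is a legal input respecting $\txfee_1\ge\cdots\ge\txfee_\numuser$, $\cost_1\le\cdots\le\cost_\numprover$, the tie-breaking rule, and has strictly positive honest joint utility so that ``lower'' is genuine. I would defuse the mixed-move case by always building the background so that the prover-price $\cost_{\ell+1}$ is pinned from above by a user fee the coalition cannot observe: then no reshuffling of reported capacities can raise the price, and the multi-lever move collapses to the worst of its single-lever components. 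Finally I would note that, because market actors are risk-averse (\cref{rem:risk}), the existence of even one such background makes the collusive deviation unsafe, which is precisely what the sealed-bid mitigation then exploits.
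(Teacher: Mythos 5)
Your overall strategy is the right one and matches the paper's: for every collusive deviation, exhibit a background of honest prover and user bids on which the coalition's joint utility strictly drops, and let risk aversion do the rest. You also correctly dispose of over-reported capacities via slashing. However, the step you yourself flag as the "main obstacle" --- exhaustiveness of the three-lever classification --- is not a detail to be checked later; it is the entire content of the proof, and your decomposition does not obviously provide it. A deviation is just an arbitrary pair of reported bids; it is not "aimed" at a lever, so classifying by intended effect (inflate the price, grow $\ell$, get $b$ allocated) leaves you having to argue that every bid pair realizes one of these effects on every background, which is false in general and is exactly where mixed moves escape. The paper instead splits on the \emph{syntactic form} of the deviation: either both reported costs are truthful (so only capacities changed), or $\cost'_1 \lessgtr \cost_1$, or $\cost'_2 \lessgtr \cost_2$. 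These cases are trivially exhaustive, and for each one the paper writes down a concrete background --- e.g., for $\cost'_2 > \cost_2$, one extra honest prover with cost $(\cost_2+\cost'_2)/2$ and capacity $\capacity_2$ together with $\capacity_1+2\capacity_2$ tasks at fee $(\cost_2+\cost'_2)/2$ --- and a quantitative lower bound on the loss, e.g.\ $\capacity_2(\cost'_2-\cost_2)/2$. Pinning the midpoint $(\cost_2+\cost'_2)/2$ as both the marginal prover's cost and the uniform user fee is the trick that makes each counterexample self-contained; your "pin the price from above by an unobservable user fee" idea is the same instinct but is never instantiated.

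A second structural problem: you propose to split on the honest-bidding allocation status of $a$ and $b$ (both allocated, only $a$, neither). That status is determined by the background, but the background is precisely what you are free to choose; conditioning the case analysis on a property of an object you have not yet constructed makes the argument circular. The quantifier order in the statement is "for all deviations, there exist bids," so the case split must be driven by the deviation alone, as in the paper. To repair your write-up, replace the lever/status decomposition with the five-way split on $(\cost'_1,\cost'_2)$ versus $(\cost_1,\cost_2)$, and for each case supply an explicit background (one auxiliary prover plus a block of equal-fee tasks suffices throughout) together with the resulting utility gap.
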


\section{Conclusions and Future Work}
\label{sec:conclusions}
We have presented \tool, a mechanism designed to address the demands of ZKP markets, where proofs are generated in batches.
\tool consists of a core auction mechanism that matches low-cost provers with high-value user tasks. We formally analyze the mechanism and show that it ensures incentive compatibility for both users and provers and that it is budget-balanced.
We also introduce system-level designs such as slashing, fixing prover capacity bids over multiple auction rounds, and encrypting bids, to address security challenges that are not handled by the core mechanism alone.
We analyze how our measures mitigate security concerns including untruthful capacity bids, Sybil attacks, and collusion.

\parhead{Future work.}
Following the literature~\cite{chung2023foundations,gafni2024barriers,bahrani2024transaction,roughgarden2021transaction}, we consider a non-Bayesian setting.
One possible direction for future work is to extend the model to a Bayesian one where actors may deviate to increase \emph{expected} payoffs.
For example, provers can send Sybil user bids to ``simulate'' a reserve price for users, thus potentially increasing revenue.
Analyzing this manipulation could be interesting from the perspective of the adversaries, who should surmount several practical difficulties.
First, attackers should have robust estimates for the distribution of user valuations and the distribution of the number of users.
Second, in a permissionless setting, price gouging may not be profitable in the long run: it can allow new actors to offer lower prices while still making a profit~\cite{huberman2021monopoly}.
Moreover, long-term dynamics may result in lower-paying transactions accumulating until allocating the pent-up demand becomes profitable for provers~\cite{nisan2023serial}.

\section*{Acknowledgments}
The authors wish to thank Matter Labs for research discussions and support with several aspects of this work.
Wenhao Wang was supported in part by the ZK Fellowship from Matter Labs.

\printbibliography

@misc{scrollTFM,
  title = {Transaction Fees on Scroll},
  howpublished = {\url{https://docs.scroll.io/en/developers/transaction-fees-on-scroll/}},
}

@misc{zksync,
  title = {Workflow of a zkSync Era transaction: from generation to finalization},
  howpublished = {\url{https://blog.quarkslab.com/zksync-transaction-workflow.html}},
}

@misc{starknet,
  title = {The Starknet Book},
  howpublished = {\url{https://book.starknet.io/title-page.html}},
}

@misc{nil,
  title = {=nil; Proof Market},
  howpublished = {\url{https://docs.nil.foundation/proof-market/market/economics}},
}

@misc{Taiko,
  title = {Taiko Tokenomics},
  howpublished = {\url{https://github.com/taikoxyz/taiko-mono/blob/42bbc5/packages/protocol/docs/tokenomics_staking.md}},
}

@misc{gevuloteco,
  title = {Gevulot Docs: Economics},
  howpublished = {\url{https://docs.gevulot.com/gevulot-docs/network/fees}},
}

@misc{l2beat-activity,
  title = {Layer 2 Activity},
  howpublished = {\url{https://l2beat.com/scaling/activity}},
}

@Article{huberman2021monopoly,
  author   = {Huberman, Gur and Leshno, Jacob D and Moallemi, Ciamac},
  journal  = {The Review of Economic Studies},
  title    = {{Monopoly without a Monopolist: An Economic Analysis of the Bitcoin Payment System}},
  year     = {2021},
  abstract = {{Bitcoin provides its users with transaction-processing services which are similar to those of traditional payment systems. This article models the novel economic structure implied by Bitcoin’s innovative decentralized design, which allows the payment system to be reliably operated by unrelated parties called miners. We find that this decentralized design protects users from monopoly pricing. Competition among service providers within the platform and free entry imply no entity can profitably affect the level of fees paid by users. Instead, a market for transaction-processing determines the fees users pay to gain priority and avoid transaction-processing delays. The article (i) derives closed-form formulas of the fees and waiting times and studies their properties, (ii) compares pricing under the Bitcoin Payment System to that under a traditional payment system operated by a profit-maximizing firm, and (iii) suggests protocol design modifications to enhance the platform’s efficiency. The Appendix describes and explains the main attributes of Bitcoin and the underlying blockchain technology.}},
  doi      = {10.1093/restud/rdab014},
}

@inproceedings{chung2023foundations,
  title={Foundations of transaction fee mechanism design},
  author={Chung, Hao and Shi, Elaine},
  booktitle={Proceedings of the 2023 Annual ACM-SIAM Symposium on Discrete Algorithms},
  year={2023},
  doi={10.1137/1.9781611977554.ch150},
}

@Article{roughgarden2021transaction,
author = {Roughgarden, Tim},
title = {Transaction Fee Mechanism Design},
year = {2024},
publisher = {Association for Computing Machinery},
address = {New York, NY, USA},
volume = {71},
number = {4},
issn = {0004-5411},
doi = {10.1145/3674143},
abstract = {Demand for blockchains such as Bitcoin and Ethereum is far larger than supply, necessitating a mechanism that selects a subset of transactions to include “on-chain” from the pool of all pending transactions. This article investigates the problem of designing a blockchain transaction fee mechanism through the lens of mechanism design. We introduce two new forms of incentive compatibility that capture some of the idiosyncrasies of the blockchain setting, one (MMIC) that protects against deviations by profit-maximizing miners and one (OCA-proofness) that protects against off-chain collusion between miners and users.This study is immediately applicable to a major change to Ethereum’s transaction fee mechanism, made on August 5, 2021, based on a proposal called “EIP-1559.” Originally, Ethereum’s transaction fee mechanism was a first-price (pay-as-bid) auction. EIP-1559 suggested making several tightly coupled changes, including the introduction of variable-size blocks, a history-dependent reserve price, and the burning of a significant portion of the transaction fees. We prove that this new mechanism earns an impressive report card: it satisfies the MMIC and OCA-proofness conditions, and is also dominant-strategy incentive compatible (DSIC) except when there is a sudden demand spike. We also introduce an alternative design, the “tipless mechanism,” which offers an incomparable slate of incentive-compatibility guarantees—it is MMIC and DSIC, and OCA-proof unless in the midst of a demand spike.},
journal = {J. ACM},
articleno = {30},
numpages = {25},
keywords = {Blockchain protocols, transaction fee mechanisms, Ethereum}
}

@InProceedings{gafni2024barriers,
  author    = {Gafni, Yotam and Yaish, Aviv},
  booktitle = {Proceedings of the 25th ACM Conference on Economics and Computation},
  title     = {Barriers to Collusion-resistant Transaction Fee Mechanisms},
  year      = {2024},
  series    = {EC '24},
  abstract  = {To allocate transactions to blocks, cryptocurrencies use an auction-like transaction fee mechanism (TFM). A conjecture of Roughgarden [Rou21] asks whether there is a TFM that is incentive compatible for both the users and the miner, and is also resistant to off-chain agreements (OCAs) between these parties, a collusion notion that captures the ability of users and the miner to jointly deviate for profit. The work of Chung and Shi [CS23] tackles the problem using the different collusion resistance notion of side-channel proofness (SCP), and shows an impossibility given this notion. We show that OCA-proofness and SCP are different, with SCP being strictly stronger. We then fully characterize the intersection of deterministic dominant strategy incentive-compatible (DSIC) and OCA-proof mechanisms, as well as deterministic MMIC and OCA-proof ones, and use this characterization to show that only the trivial mechanism is DSIC, myopic miner incentive-compatible (MMIC) and OCA-proof. We also show that a randomized mechanism can be at most 0.842-efficient in the worst case, and that the impossibility of a non-trivial DSIC, MMIC and OCA-proof extends to a couple of natural classes of randomized mechanisms.},
  doi       = {10.1145/3670865.3673469},
  keywords  = {Optimal Auctions, Blockchain, Mechanism Design, Transaction Fee Mechanisms},
}

@InProceedings{bahrani2024transaction,
  author =	{Bahrani, Maryam and Garimidi, Pranav and Roughgarden, Tim},
  title =	{{Transaction Fee Mechanism Design in a Post-MEV World}},
  booktitle =	{6th Conference on Advances in Financial Technologies (AFT '24)},
  pages =	{29:1--29:24},
  year =	{2024},
  URN =		{urn:nbn:de:0030-drops-209658},
  doi =		{10.4230/LIPIcs.AFT.2024.29},
}

@InProceedings{gafni2024discrete,
  author    = {Gafni, Yotam and Yaish, Aviv},
  booktitle = {Mathematical Research for Blockchain Economy},
  doi       = {10.1007/978-3-031-68974-1_8},
  title     = {Discrete \& Bayesian Transaction Fee Mechanisms},
  year      = {2024},
  address   = {Cham},
  editor    = {Leonardos, Stefanos and Alfieri, Elise and Knottenbelt, William and Pardalos, Panos},
  publisher = {Springer},
  series    = {MARBLE '24},
  abstract  = {Cryptocurrencies employ auction-esque transaction fee mechanisms (TFMs) to allocate transactions to blocks, and to determine how much fees miners can collect from transactions. Several impossibility results show that TFMs that satisfy a standard set of "good" properties obtain low revenue, and in certain cases, no revenue at all. In this work, we circumvent previous impossibilities by showing that when desired TFM properties are reasonably relaxed, simple mechanisms can obtain strictly positive revenue. By discretizing fees, we design a TFM that satisfies the extended TFM desiderata: it is dominant strategy incentive-compatible (DSIC), myopic miner incentive-compatible (MMIC), side-contract-proof (SCP) and obtains asymptotically optimal revenue (i.e., linear in the number of allocated bids), and optimal revenue when considering separable TFMs. If instead of discretizing fees we relax the DSIC and SCP properties, we show that Bitcoin's TFM, after applying the revelation principle, is Bayesian incentive-compatible (BIC), MMIC, off-chain-agreement (OCA) proof, and approximately revenue-optimal. We reach our results by characterizing the class of multi-item OCA-proof mechanisms, which may be of independent interest.},
}

@Misc{gafni2024scheduling,
  author   = {Gafni, Yotam and Yaish, Aviv},
  title    = {Scheduling With Time Discounts},
  doi      = {10.48550/arXiv.2402.08549},
  abstract = {Decentralized cryptocurrencies are payment systems that rely on aligning the incentives of users and miners to operate correctly and offer a high quality of service to users. Recent literature studies the mechanism design problem of the auction serving as a cryptocurrency's transaction fee mechanism (TFM). We find that a non-myopic modelling of miners falls close to another well-known problem: that of online buffer management for packet switching. The main difference is that unlike packets which are of a fixed size throughout their lifetime, in a financial environment, user preferences (and therefore revenue extraction) may be time-dependent. We study the competitive ratio guarantees given a certain discount rate, and show how existing methods from packet scheduling, which we call "the undiscounted case", perform suboptimally in the more general discounted setting. Most notably, we find a novel, simple, memoryless, and optimal deterministic algorithm for the semi-myopic case, when the discount factor is up to ≈ 0.770018. We also present a randomized algorithm that achieves better performance than the best possible deterministic algorithm, for any discount rate.},
  date     = {2024},
}

@InProceedings{segalhalevi2018muda,
  author    = {Segal-Halevi, Erel and Hassidim, Avinatan and Aumann, Yonatan},
  booktitle = {Proceedings of the Thirty-Second AAAI Conference on Artificial Intelligence},
  title     = {MUDA: a truthful multi-unit double-auction mechanism},
  year      = {2018},
  doi       = {10.1609/aaai.v32i1.11450},
  abstract  = {In a seminal paper, McAfee (1992) presented a truthful mechanism for double auctions, attaining asymptotically-optimal gain-from-trade without any prior information on the valuations of the traders. McAfee's mechanism handles single-parametric agents, allowing each seller to sell a single unit and each buyer to buy a single unit. This paper presents a double-auction mechanism that handles multi-parametric agents and allows multiple units per trader, as long as the valuation functions of all traders have decreasing marginal returns. The mechanism is prior-free, ex-post individually-rational, dominant-strategy truthful and strongly-budget-balanced. Its gain-from-trade approaches the optimum when the market size is sufficiently large.},
}

@Article{huang2002design,
  author    = {Huang, Pu and Scheller–Wolf, Alan and Sycara, Katia},
  journal   = {Computational Intelligence},
  title     = {Design of a Multi–Unit Double Auction E–Market},
  year      = {2002},
  number    = {4},
  pages     = {596--617},
  volume    = {18},
  abstract  = {We envision a future economy where e–markets will play an essential role as exchange hubs for commodities and services. Future e–markets should be designed to be robust to manipulation, flexible, and sufficiently efficient in facilitating exchanges. One of the most important aspects of designing an e–market is market mechanism design. A market mechanism defines the organization, information exchange process, trading procedure, and clearance rules of a market. If we view an e–market as a multi–agent system, the market mechanism also defines the structure and rules of the environment in which agents (buyers and sellers) play the market game. We design an e–market mechanism that is strategy–proof with respect to reservation price, weakly budget–balanced, and individually rational. Our mechanism also makes sellers unlikely to underreport the supply volume to drive up the market price. In addition, by bounding our market’s efficiency loss, we provide fairly unrestrictive sufficient conditions for the efficiency of our mechanism to converge in a strong sense when (1) the number of agents who successfully trade is large, or (2) the number of agents, trading and not, is large. We implement our design using the RETSINA infrastructure, a multi–agent system development toolkit. This enables us to validate our analytically derived bounds by numerically testing our e–market.},
  doi       = {10.1111/1467-8640.t01-1-00206},
  publisher = {Wiley},
}

@Article{parsons2011auctions,
  author     = {Parsons, Simon and Rodriguez-Aguilar, Juan A. and Klein, Mark},
  journal    = {ACM Comput. Surv.},
  title      = {Auctions and bidding: A guide for computer scientists},
  year       = {2011},
  abstract   = {There is a veritable menagerie of auctions—single-dimensional, multi-dimensional, single-sided, double-sided, first-price, second-price, English, Dutch, Japanese, sealed-bid—and these have been extensively discussed and analyzed in the economics literature. The main purpose of this article is to survey this literature from a computer science perspective, primarily from the viewpoint of computer scientists who are interested in learning about auction theory, and to provide pointers into the economics literature for those who want a deeper technical understanding. In addition, since auctions are an increasingly important topic in computer science, we also look at work on auctions from the computer science literature. Overall, our aim is to identifying what both these bodies of work these tell us about creating electronic auctions.},
  doi        = {10.1145/1883612.1883617},
  keywords   = {resource allocation, Auctions},
}

@InProceedings{ernstberger2024zk,
  author    = {Ernstberger, Jens and Chaliasos, Stefanos and Kadianakis, George and Steinhorst, Sebastian and Jovanovic, Philipp and Gervais, Arthur and Livshits, Benjamin and Orr{\`u}, Michele},
  booktitle = {Security and Cryptography for Networks},
  title     = {zk-Bench: A Toolset for Comparative Evaluation and Performance Benchmarking of SNARKs},
  year      = {2024},
  address   = {Cham},
  editor    = {Galdi, Clemente and Phan, Duong Hieu},
  pages     = {46--72},
  publisher = {Springer Nature Switzerland},
  abstract  = {Zero-Knowledge Proofs (ZKPs), especially Succinct Non-interactive ARguments of Knowledge (SNARKs), have garnered significant attention in modern cryptographic applications. Given the multitude of emerging tools and libraries, assessing their strengths and weaknesses is nuanced and time-consuming. Often, claimed results are generated in isolation, and omissions in details render them irreproducible. The lack of comprehensive benchmarks, guidelines, and support frameworks is a major barrier in the development of ZKP applications.},
  doi       = {10.1007/978-3-031-71070-4_3},
  isbn      = {978-3-031-71070-4},
}

@Article{vickrey1961counterspeculation,
  author    = {William Vickrey},
  journal   = {The Journal of Finance},
  title     = {Counterspeculation, Auctions, and Competitive Sealed Tenders},
  year      = {1961},
  number    = {1},
  volume    = {16},
  abstract  = {IN HIS Economics of Control, A. P. Lerner threw out an interesting suggestion that where markets are imperfectly competitive, a state agency, through "counterspeculation," might be able to create the conditions whereby the marginal conditions for efficient resource allocation could be maintained. Unfortunately, it was not made clear just how this counterspeculation was to be carried out, and to many this term denotes just one more of the empty boxes that rattle around in the economist's cupboard of ideas. And there appears to have been, in the years since Economics of Control first appeared, no attempt to examine critically just what this intriguingly labeled box might in fact contain. In Section I this counterspeculation box will be further examined; it turns out that most of the devices that most immediately suggest themselves under this heading prove to be inordinately expensive in terms of their demands on the fiscal resources of the state relative to the net benefits to be realized, at least where the commodity in question is finely divisible. The other extreme case, where there is only a single indivisible item to be allocated, is examined in Section II; in this case the possibilities for reaching an optimum solution in a market with a limited number of participants become considerably brighter: the common or progressive type of auction can be shown to provide better chances for optimal allocation than the regressive or "Dutch" auction. The implications of these findings for the inore significant cases where contracts are let or sales made by competitive bids or tenders are examined in Section III; the analysis reveals a likelihood that certain modifications of current practices in these areas, more specifically by making the award price equal to the second highest (or lowest) bid price rather than the highest bid price, might prove generally beneficial in improving the allocation of resources without being as prejudicial to the interests of sellers (or buyers) as might at first seem to be the case. Section IV deals with the somewhat more complicated and general class of cases where there are several identical items to be auctioned, and Section V deals with the application of the concepts derived in Section IV to the sale of a number of identical units under sealed-bid conditions; it turns out that here, too, significant gains can be expected from certain departures from currently prevalent practices.},
  doi       = {10.2307/2977633},
  publisher = {[American Finance Association, Wiley]},
}

@Article{mcafee1992dominant,
  author    = {McAfee, R.Preston},
  journal   = {Journal of Economic Theory},
  title     = {A dominant strategy double auction},
  year      = {1992},
  abstract  = {A double auction mechanism that provides dominant strategies for both buyers and sellers is analyzed. This mechanism satisfies the 1n convergence to efficiency of the buyer's bid double auction. In addition, the mechanism always produces full information first best prices; the inefficiency arises because the least valuable profitable trade may be prohibited by the mechanism. The mechanism has an oral implementation utilizing bid and asked prices.},
  doi       = {10.1016/0022-0531(92)90091-u},
  publisher = {Elsevier BV},
}

@inproceedings{chen2023hyperplonk,
  title={Hyperplonk: Plonk with linear-time prover and high-degree custom gates},
  author={Chen, Binyi and B{\"u}nz, Benedikt and Boneh, Dan and Zhang, Zhenfei},
  booktitle={Annual International Conference on the Theory and Applications of Cryptographic Techniques},
  year={2023},
  organization={Springer},
  doi = {10.1007/978-3-031-30617-4_17},
}

@inproceedings{kate2010constant,
author="Kate, Aniket
and Zaverucha, Gregory M.
and Goldberg, Ian",
title="Constant-Size Commitments to Polynomials and Their Applications",
booktitle="Advances in Cryptology - ASIACRYPT",
year="2010",
publisher="Springer",
abstract="We introduce and formally define polynomial commitment schemes, and provide two efficient constructions. A polynomial commitment scheme allows a committer to commit to a polynomial with a short string that can be used by a verifier to confirm claimed evaluations of the committed polynomial. Although the homomorphic commitment schemes in the literature can be used to achieve this goal, the sizes of their commitments are linear in the degree of the committed polynomial. On the other hand, polynomial commitments in our schemes are of constant size (single elements). The overhead of opening a commitment is also constant; even opening multiple evaluations requires only a constant amount of communication overhead. Therefore, our schemes are useful tools to reduce the communication cost in cryptographic protocols. On that front, we apply our polynomial commitment schemes to four problems in cryptography: verifiable secret sharing, zero-knowledge sets, credentials and content extraction signatures.",
}

@inproceedings{baldimtsi2024zklogin,
  title={zklogin: Privacy-preserving blockchain authentication with existing credentials},
  author={Baldimtsi, Foteini and Chalkias, Konstantinos Kryptos and Ji, Yan and Lindstr{\o}m, Jonas and Maram, Deepak and Riva, Ben and Roy, Arnab and Sedaghat, Mahdi and Wang, Joy},
  booktitle={Proceedings of the 2024 on ACM SIGSAC Conference on Computer and Communications Security},
  pages={3182--3196},
  year={2024},
  doi={10.1145/3658644.3690356}
}

@misc{gong2024v3rified,
      title={V3rified: Revelation vs Non-Revelation Mechanisms for Decentralized Verifiable Computation}, 
      author={Tiantian Gong and Aniket Kate and Alexandros Psomas and Athina Terzoglou},
      year={2024},
      eprint={2408.07177},
      archivePrefix={arXiv},
}

@inproceedings{thyagarajan2021opensquare,
  title={Opensquare: Decentralized repeated modular squaring service},
  author={Thyagarajan, Sri Aravinda Krishnan and Gong, Tiantian and Bhat, Adithya and Kate, Aniket and Schr{\"o}der, Dominique},
  booktitle={Proceedings of the 2021 ACM SIGSAC Conference on Computer and Communications Security},
  year={2021},
  doi={10.1145/3460120.3484809}
}

@inproceedings{zhang2020deco,
  title={Deco: Liberating web data using decentralized oracles for tls},
  author={Zhang, Fan and Maram, Deepak and Malvai, Harjasleen and Goldfeder, Steven and Juels, Ari},
  booktitle={Proceedings of the 2020 ACM SIGSAC Conference on Computer and Communications Security},
  pages={1919--1938},
  year={2020},
  doi={10.1145/3372297.3417239}
}

@article{gabizon2019plonk,
  title={Plonk: Permutations over lagrange-bases for oecumenical noninteractive arguments of knowledge},
  author={Gabizon, Ariel and Williamson, Zachary J and Ciobotaru, Oana},
  journal={Cryptology ePrint Archive},
  year={2019},
  url={https://ia.cr/2019/953}
}

@Misc{nisan2023serial,
  author   = {Nisan, Noam},
  title    = {Serial Monopoly on Blockchains},
  year     = {2023},
  abstract = {We study the following problem that is motivated by Blockchains where “miners” are serially given the monopoly for assembling transactions into the next block. Our model has a single good that is sold repeatedly every day where new demand for the good arrives every day. The novel element in our model is that all unsatisfied demand from one day remains in the system and is added to the new demand of the next day. Every day there is a new monopolist that gets to sell a fixed supply s of the good and naturally chooses to do so at the monopolist’s price for the combined demand. What will the dynamics of the prices chosen by the sequence of monopolists be? What level of efficiency will be obtained in the long term? We start with a non-strategic analysis of users’ behavior and our main result shows that prices keep fluctuating wildly and this is an endogenous property of the model and happens even when demand is stable with nothing stochastic in the model. These price fluctuations underscore the necessity of an analysis under strategic behavior of the users, which we show results in the prices being stable at the market equilibrium price.},
  doi      = {10.48550/arXiv.2311.12731},
}

@Misc{penna2024serial,
  author        = {Penna, Paolo and Schneider, Manvir},
  title         = {Serial {Monopoly} on {Blockchains} with {Quasi}-patient {Users}},
  year          = {2024},
  abstract      = {This paper introduces and investigates an extension of the price dynamics in serial monopoly blockchain described in Nisan [Nis23], tailored to accommodate quasi-patient users. Our model reflects users' diminishing interest in having their transactions added to the ledger over time, resulting in only a fraction \${\textbackslash}delta\$ of the current demand persisting in the subsequent round. The framework presented by Lavi et al. [LSZ22], where users are impatient and derive utility only from immediate transaction inclusion in the next block, corresponds to \${\textbackslash}delta=0\$. Fully patient users who wait forever as in [Nis23], correspond to \${\textbackslash}delta=1\$ in our model. This work provides new bounds on the price dynamics for the more interesting case \${\textbackslash}delta{\textbackslash}in(0,1)\$, showing somewhat unexpected effects on the dynamics itself. While the dynamics for the fully patient case is essentially "oblivious" of the structure of the daily demand curve, this is no longer true for finite \${\textbackslash}delta {\textless} 1\$. Moreover, the dynamics undergoes a "transition phase" where for some \${\textbackslash}delta\$ it behaves as in the fully patient setting (\${\textbackslash}delta=1\$), and for some smaller values \${\textbackslash}delta'{\textless}{\textbackslash}delta\$ it stops "oscillating" and stays at the highest ("monopolist") price. We provide quantitative bounds and analytical results that apply to different demand functions showing that the bounds for \${\textbackslash}delta=1\$ are not tight in general, for \${\textbackslash}delta{\textless}1\$. These provide guarantees on the minimum ("admission") price such that transaction willing to pay that price are eventually included (and those who do not want are never included).},
  archiveprefix = {arxiv},
  doi           = {10.48550/arXiv.2405.17334},
  keywords      = {Computer Science - Computer Science and Game Theory},
  primaryclass  = {cs},
  school        = {arXiv},
}

@Article{gafni2023optimal,
  author       = {Gafni, Yotam and Tennenholtz, Moshe},
  title        = {Optimal Mechanism Design for Agents with DSL Strategies: The Case of Sybil Attacks in Combinatorial Auctions},
  issn         = {2075-2180},
  pages        = {245--259},
  volume       = {379},
  abstract     = {In robust decision making under uncertainty, a natural choice is to go with safety (aka security) level strategies. However, in many important cases, most notably auctions, there is a large multitude of safety level strategies, thus making the choice unclear. We consider two refined notions:
                  (i) a term we call DSL (distinguishable safety level), and is based on the notion of ``discrimin'', which uses a pairwise comparison of actions while removing trivial equivalencies. This captures the fact that when comparing two actions an agent should not care about payoffs in situations where they lead to identical payoffs.
                  (ii) The well-known Leximin notion from social choice theory, which we apply for robust decision-making. In particular, the leximin is always DSL but not vice-versa.
                  We study the relations of these notions to other robust notions, and illustrate the results of their use in auctions and other settings. Economic design aims to maximize social welfare when facing self-motivated participants. In online environments, such as the Web, participants' incentives take a novel form originating from the lack of clear agent identity -- the ability to create Sybil attacks, i.e., the ability of each participant to act using multiple identities. It is well-known that Sybil attacks are a major obstacle for welfare-maximization. Our main result proves that when DSL attackers face uncertainty over the auction's bids, the celebrated VCG mechanism is welfare-maximizing even under Sybil attacks. Altogether, our work shows a successful fundamental synergy between robustness under uncertainty, economic design, and agents' strategic manipulations in online multi-agent systems.},
  date         = {2023},
  doi          = {10.4204/eptcs.379.20},
  journaltitle = {Electronic Proceedings in Theoretical Computer Science},
  publisher    = {Open Publishing Association},
}

@InProceedings{roughgarden2021ignore,
  author     = {Roughgarden, Tim and Shikhelman, Clara},
  booktitle  = {Financial {Cryptography} and {Data} {Security}},
  title      = {Ignore the {Extra} {Zeroes}: {Variance}-{Optimal} {Mining} {Pools}},
  year       = {2021},
  pages      = {233--249},
  publisher  = {Springer},
  abstract   = {Mining pools decrease the variance in the income of cryptocurrency miners (compared to solo mining) by distributing rewards to participating miners according to the shares submitted over a period of time. The most common definition of a “share” is a proof-of-work for a difficulty level lower than that required for block authorization—for example, a hash with at least 65 leading zeroes (in binary) rather than at least 75.},
  doi        = {10.1007/978-3-662-64331-0_12},
  keywords   = {Blockchains, Cryptocurrencies, Mining pools, Variance-minimization},
  shorttitle = {Ignore the {Extra} {Zeroes}},
}

@InProceedings{fisch2017socially,
author="Fisch, Ben
and Pass, Rafael
and Shelat, Abhi",
editor="R. Devanur, Nikhil
and Lu, Pinyan",
title="Socially Optimal Mining Pools",
booktitle="Web and Internet Economics",
year="2017",
publisher="Springer",
address="Cham",
pages="205--218",
abstract="Mining for Bitcoins is a high-risk high-reward activity. Miners, seeking to reduce their variance and earn steadier rewards, collaborate in so-called pooling strategies where they jointly mine for Bitcoins. Whenever some pool participant is successful, the earned rewards are appropriately split among all pool participants. Currently a dozen of different pooling strategies are in use for Bitcoin mining. We here propose a formal model of utility and social optimality for Bitcoin mining (and analogous mining systems) based on the theory of discounted expected utility, and next study pooling strategies that maximize the utility of participating miners in this model. We focus on pools that achieve a steady-state utility, where the utility per unit of work of all participating miners converges to a common value. Our main result shows that one of the pooling strategies actually employed in practice---the so-called geometric pay pool---achieves the optimal steady-state utility for miners when its parameters are set appropriately. Our results apply not only to Bitcoin mining pools, but any other form of pooled mining or crowdsourcing computations where the participants engage in repeated random trials towards a common goal, and where ``partial'' solutions can be efficiently verified.",
doi={10.1007/978-3-319-71924-5_1},
}

@InProceedings{yaish2023correct,
  author =	{Yaish, Aviv and Zohar, Aviv},
  title =	{{Correct Cryptocurrency ASIC Pricing: Are Miners Overpaying?}},
  booktitle =	{5th Conference on Advances in Financial Technologies},
  pages =	{2:1--2:25},
  year =	{2023},
  volume =	{282},
  URN =		{urn:nbn:de:0030-drops-191919},
  doi =		{10.4230/LIPIcs.AFT.2023.2},
  annote =	{Keywords: Cryptocurrency, Blockchain, Proof of Work, Economics}
}

@article{cong2020decentralized,
    author = {Cong, Lin William and He, Zhiguo and Li, Jiasun},
    title = {Decentralized Mining in Centralized Pools},
    journal = {The Review of Financial Studies},
    volume = {34},
    number = {3},
    pages = {1191-1235},
    year = {2020},
    month = {04},
    abstract = {The rise of centralized mining pools for risk sharing does not necessarily undermine the decentralization required for blockchains: because of miners’ cross-pool diversification and pool managers’ endogenous fee setting, larger pools better internalize their externality on global hash rates, charge higher fees, attract disproportionately fewer miners, and grow more slowly. Instead, mining pools as a financial innovation escalate miners’ arms race and significantly increase the energy consumption of proof-of-work-based blockchains. Empirical evidence from Bitcoin mining supports our model’s predictions. The economic insights inform other consensus protocols and the industrial organization of mainstream sectors with similar characteristics but ambiguous prior findings.},
    issn = {0893-9454},
    doi = {10.1093/rfs/hhaa040},
}

@Misc{roy2024succinct,
  author   = {Roy, Uma and Guibas, John and Kulkarni, Kshitij and Pai, Mallesh and Robinson, Dan},
  title    = {Succinct Network: Prove the World’s Software},
  year     = {2024},
  abstract = {We propose the Succinct Network, a decentralized protocol that proves the world’s software. The network coordinates a distributed set of provers who generate zero-knowledge proofs through a novel incentive mechanism called proof contests to create the world’s most efficient, robust proving cluster. Users submit requests in the form of RISC-V programs with inputs and fees that determine proving priority. Provers engage in proof contests, which are all-pay auctions for the right to generate proofs with minimal latency and cost, in order to earn fees. The auctions settle on an application-specific blockchain. Both proof contests and the blockchain are codesigned with SP1, a zkVM that proves the execution of RISC-V bytecode. Proof contests are specifically designed to balance the competing goals of minimizing proving costs while encouraging a decentralized set of provers. The network coordinates the ecosystem of users, hardware teams, and infrastructure operators with cryptoeconomic incentives. Permissionless participation in the protocol catalyzes a global-scale infrastructure buildout that drives down the cost and latency of zero-knowledge proof generation for all.},
  url      = {https://www.provewith.us/},
}
\newpage
\section*{Appendix}
\appendix

\section{Proofs}
\label{sec:proof}
\subsection{Full Proof of~\cref{prop:new-duic}}
\begin{myproof}
    We need to show that it is not profitable for any user to bid other than its valuation.
    Recall that the users' bids satisfy $\txfee_1 \geq \cdots \geq \txfee_\numuser$, and the provers' bids satisfy $\cost_1 \leq \cdots \leq \cost_\numprover$.
    Suppose that $\txfee_i$ is user $i$'s valuation of its task.
    We assume that user $i$ changes its bid from $\txfee_i$ to $\txfee$.
    After the bid change, the new user bids are $\txfee'_1 \geq \cdots \geq \txfee'_\numuser$, and the new prover bids are $\cost'_1 \leq \cdots \leq \cost'_\numprover$.
    The allocated number of provers are $\ell'$ and the allocated number of tasks are $\capsum'_{\ell'}$
    Since no prover's bid has changed, we have $\cost'_j = \cost_j$ and $\capsum'_j = \capsum_j$ for all $j = 1, \cdots, \numprover$.
    We discuss the utility change of user $i$ depending on whether or not it is allocated, i.e., whether or not $i \in [1, \capsum_\ell]$.
    A task is said to have \textit{rank $i$} if its fee is the $i$-th highest among all tasks.
    Let $R(i)$ and $R'(i)$ be the rank of user $i$'s task before and after changing her bid.
    In each case, a figure is used to demonstrate the relation between variables, and there could be more possibilities.
    \begin{itemize}[leftmargin=*]
    \item If $i \in [1, \capsum_\ell]$, i.e., task $i$ is allocated, then we consider whether $\txfee < \txfee_{\capsum_\ell + 1}$.
    
    \begin{itemize}[leftmargin=*]
    \item If $\txfee \geq \txfee_{\capsum_\ell + 1}$, the concern is that user $i$ can somehow decrease its payment while its task is still allocated after the bid change.

    \begin{center}
    \begin{tikzpicture}
    \draw[->] (-3,0) -- (5,0) node[right] {};
    \draw (-2,0.1) -- (-2,-0.1) node[below] {$\cost_{\ell + 1}$};
    \draw (0,0.1) -- (0,-0.1) node[below] {$\txfee_{\capsum_\ell + 1}$};
    \draw (1,0.1) -- (1,-0.1) node[below] {$\txfee$};
    \draw (3,0.1) -- (3,-0.1) node[below] {$\txfee_i$};
    \end{tikzpicture}
    \end{center}
    
    We show that in this case task $i$ is still allocated, while the payment $\txfee'_{\capsum'_{\ell'} + 1}$ remains the same.

    Firstly, we show that $R'(i) \le \capsum_\ell$ after user $i$ changes its bid to $\txfee$.
    This is because $\txfee \geq \txfee_{\capsum_\ell + 1}$ and $\txfee_i \geq \txfee_{\capsum_\ell + 1}$, which indicates that tasks with rank greater than $\capsum_\ell$ before user $i$ changes its bid will have the same rank after user $i$ changes its bid.
    It follows that $R'(i) \le \capsum_\ell$

    Secondly, we show that $\ell' = \ell$.
    We show this with $\ell' \geq \ell$ and $\ell' \leq \ell$.
    To see that $\ell' \geq \ell$, recall that $\ell = \underset{j}{\arg\max} \{ \cost_{j + 1} \leq \txfee_{\capsum_j + 1} \}$, and $\ell' = \underset{j}{\arg\max} \{ \cost'_{j + 1} \leq \txfee'_{\capsum'_j + 1} \}$.
    Note that $\txfee'_{\capsum_\ell + 1} = \txfee_{\capsum_\ell + 1}$, since $R'(i) \le\capsum_\ell$.
    Further note that $\cost'_{\ell+1} = \cost_{\ell + 1}$ and $\capsum'_{\ell} = \capsum_{\ell}$.
    It follows that 
    $$
    \cost'_{\ell + 1} = \cost_{\ell + 1} \leq \txfee_{\capsum_\ell + 1} = \txfee'_{\capsum_\ell + 1} = \txfee'_{\capsum'_\ell + 1}.
    $$
    So $\ell' \geq \ell$ by the definition of $\ell'$.
    To see $\ell' \leq \ell$, for any $\ell^\ast > \ell$ we have $\cost_{\ell^\ast + 1} > \txfee_{\capsum_{\ell^\ast} + 1}$ by the definition of $\ell$.
    Note that $f'_{\capsum_{\ell^\ast} + 1} = f_{\capsum_{\ell^\ast} + 1}$, since $\capsum_{\ell} < \capsum_{\ell^\ast} + 1$ by the definition of $\capsum$, and tasks with rank greater than $\capsum_\ell$ before user $i$ changes its bid will have the same rank after user $i$ changes its bid (as argued previously).
    Further recall that $\cost'_{\ell^\ast+1} = \cost_{\ell^\ast + 1}$ and $\capsum_{\ell^\ast} = \capsum'_{\ell^\ast}$.
    It follows that for any $\ell^\ast > \ell$,
    $$
    \cost'_{\ell^\ast + 1} = \cost_{\ell^\ast + 1} > \txfee_{\capsum_{\ell^\ast} + 1} = \txfee'_{\capsum_{\ell^\ast} + 1} = \txfee'_{\capsum'_{\ell^\ast} + 1}.
    $$
    Therefore, $\ell'\leq \ell$ by the definition of $\ell'$.
    Above all, $\ell' = \ell$.

    We have $\capsum'_{\ell'} = \capsum'_{\ell} = \capsum_{\ell}$.
    Since the rank of task $i$ is no more than $\capsum_{\ell}$, task $i$ will still be allocated after user $i$ changes its bid to $\txfee$.
    Also, note that the payment of this task is $f'_{\capsum'_{\ell'} + 1} =f'_{\capsum_{\ell} + 1} = f_{\capsum_{\ell} + 1}$, we have the utility of user $i$ will remain the same after user $i$ changes its bid.

    \item If $\txfee < \txfee_{\capsum_\ell + 1}$, the concern is also that user $i$ can somehow decrease its payment while its task is still allocated after it changes its bid to $\txfee$.

    \begin{center}
    \begin{tikzpicture}
    \draw[->] (-3,0) -- (5,0) node[right] {};
    \draw (-2,0.1) -- (-2,-0.1) node[below] {$\cost_{\ell + 1}$};
    \draw (-1,0.1) -- (-1,-0.1) node[below] {$\txfee$};
    \draw (0,0.1) -- (0,-0.1) node[below] {$\txfee_{\capsum_\ell + 1}$};
    
    \draw (3,0.1) -- (3,-0.1) node[below] {$\txfee_i$};
    \end{tikzpicture}
    \end{center}
    
    We show that in this case task $i$ is not allocated, so user $i$ cannot increase its utility after it changes its bid.

    Firstly, we show that $R'(i) > \capsum_\ell$.
    This is because $\txfee < \txfee_{\capsum_\ell + 1}$ and $\txfee_i \geq \txfee_{\capsum_\ell + 1}$, which indicates that there will be at least $\capsum_\ell$ tasks with bids greater than $\txfee$ after user $i$ changes its bid.
    It follows that $R'(i) > \capsum_\ell$.

    Secondly, we show that $\ell' \leq \ell$.
    To see this, for all $\ell^\ast > \ell$ we have $\cost_{\ell^\ast + 1} > \txfee_{\capsum_{\ell^\ast} + 1}$ by the definition of $\ell$.
    Note that $f'_{\capsum_{\ell^\ast} + 1} \leq f_{\capsum_{\ell^\ast} + 1}$, since $\txfee < \txfee_i$ and when there is no increasing element in a sequence, element in the new sequence will be no more than the element with the same rank in the original sequence.
    Further recall that $\cost'_{\ell^\ast+1} = \cost_{\ell^\ast + 1}$ and $\capsum_{\ell^\ast} = \capsum'_{\ell^\ast}$.
    It follows that
    $$
    \cost'_{\ell^\ast + 1} = \cost_{\ell^\ast + 1} > \txfee_{\capsum_{\ell^\ast} + 1} \geq \txfee'_{\capsum_{\ell^\ast} + 1} = \txfee'_{\capsum'_{\ell^\ast} + 1}.
    $$
    Therefore, $\ell'\leq \ell$ by the definition of $\ell'$.

    We have $\capsum'_{\ell'} \leq \capsum'_{\ell} = \capsum_{\ell}$.
    Since $R'(i) > \capsum_{\ell}$, task $i$ will not be allocated after user $i$ changes its bid to $\txfee$.
    Therefore, the utility of user $i$ will become $0$ after user $i$ changes its bid.
    \end{itemize}
    
    \item If $i \in [\capsum_\ell + 1, \numuser]$, i.e., task $i$ is not allocated, we consider the cases depending on the value of $i$.
    We have there exist some $\tilde{\ell} \geq \ell$ such that $i \in [\capsum_{\tilde{\ell}} + 1, \capsum_{\tilde{\ell} + 1}]$ for some $\tilde{\ell} \geq \ell$, or $i > \sum_{j = 1}^{\numprover} \capacity_j$.
    \begin{itemize}[leftmargin=*]
    \item If $\txfee < \txfee_{\capsum_{\tilde{\ell}}}$, we argue that task $i$ will not be included.

    \begin{center}
    \begin{tikzpicture}
    \draw[->] (-4,0) -- (5,0) node[right] {};
    \draw (-3,0.1) -- (-3,-0.1) node[below] {$\txfee_{\capsum_{\tilde{\ell} + 1}}$};
    \draw (-2,0.1) -- (-2,-0.1) node[below] {$\txfee_{i}$};
    \draw (-1,0.1) -- (-1,-0.1) node[below] {$\txfee_{\capsum_{\tilde{\ell}} +1}$};
    \draw (1,0.1) -- (1,-0.1) node[below] {$\txfee$};
    \draw (2,0.1) -- (2,-0.1) node[below] {$\txfee_{\capsum_{\tilde{\ell}}}$};
    
    \draw (3,0.1) -- (3,-0.1) node[below] {$\cost_{\ell + 1}$};
    \end{tikzpicture}
    \end{center}

    Firstly, we show that $R'(i) > \capsum_{\tilde{\ell}}$.
    This is because $\txfee < \txfee_{\capsum_{\tilde{\ell}}}$, which indicates that there are at least $\capsum_{\tilde{\ell}}$ tasks with bids greater than $\txfee$ after user $i$ changes its bid.
    It follows that $R'(i) > \capsum_{\tilde{\ell}}$.

    Secondly, we show that $\ell' \leq \tilde{\ell}$.
    To see this, for any $\ell^\ast > \tilde{\ell} \geq \ell$ we have $\cost_{\ell^\ast + 1} > \txfee_{\capsum_{\ell^\ast}+ 1}$ by the definition of $\ell$.
    Then we show that $\txfee_{\capsum_{\ell' + 1}}\geq \txfee'_{\capsum_{\ell' + 1}}$.
    If $\txfee \geq \txfee_i$, since $\capsum_{\ell' + 1} \geq i$ we have $\txfee_{\capsum_{\ell' + 1}} = \txfee'_{\capsum_{\ell' + 1}}$; if $\txfee < \txfee_i$, we have $\txfee_{\capsum_{\ell' + 1}}\geq \txfee'_{\capsum_{\ell' + 1}}$, since when there is no increasing element in a sequence, element in the the new sequence will be no more than the element with the same rank in the original sequence.
    It follows that $\txfee_{\capsum_{\ell' + 1}}\geq \txfee'_{\capsum_{\ell' + 1}}$.
    Further recall that $\cost'_{\ell^\ast + 1} = \cost_{\ell^\ast + 1}$ and $\capsum'_{\ell^\ast} = \capsum_{\ell^\ast}$.
    Therefore, for any $\ell^\ast > \tilde{\ell}$,
    $$
    \cost'_{\ell^\ast + 1} = \cost_{\ell^\ast + 1} > \txfee_{\capsum_{\ell^\ast} + 1} \geq \txfee'_{\capsum_{\ell^\ast} + 1} = \txfee'_{\capsum'_{\ell^\ast} + 1}.
    $$
    So $\ell' \leq \tilde{\ell}$ by the definition of $\ell'$.
    We have $\capsum'_{\ell'} \leq \capsum'_{\tilde{\ell}} = \capsum_{\tilde{\ell}}$.
    Since $R'(i) > \capsum_{\tilde{\ell}}$, task $i$ will not be allocated after user $i$ changes its bid to $f$.
    Therefore, the utility of user $i$ will remain $0$ after user $i$ changes its bid.
    
    \item If $\txfee \geq \txfee_{\capsum_{\tilde{\ell}}}$, the concern is that user $i$ can somehow get its task allocated will paying less than $\txfee_i$.

    \begin{center}
    \begin{tikzpicture}
    \draw[->] (-4,0) -- (5,0) node[right] {};
    \draw (-3,0.1) -- (-3,-0.1) node[below] {$\txfee_{\capsum_{\tilde{\ell} + 1}}$};
    \draw (-2,0.1) -- (-2,-0.1) node[below] {$\txfee_{i}$};
    \draw (-1,0.1) -- (-1,-0.1) node[below] {$\txfee_{\capsum_{\tilde{\ell}} +1}$};
    \draw (0,0.1) -- (0,-0.1) node[below] {$\txfee_{\capsum_{\tilde{\ell}}}$};
    \draw (2,0.1) -- (2,-0.1) node[below] {$\txfee$};
    \draw (3,0.1) -- (3,-0.1) node[below] {$\cost_{\ell + 1}$};
    \end{tikzpicture}
    \end{center}
    
    We show that in this case either task $i$ is not allocated or task $i$ is charged at least $\txfee_i$.
    
    Firstly, we show that $\ell' \leq \tilde{\ell}$.
    To see this, for any $\ell^\ast > \tilde{\ell} \geq \ell$ we have $\cost_{\ell^\ast + 1} > \txfee_{\capsum_{\ell^\ast}+ 1}$ by the definition of $\ell$.
    Then we show that $\txfee_{\capsum_{\ell' + 1}}\geq \txfee'_{\capsum_{\ell' + 1}}$.
    If $\txfee \geq \txfee_i$, since $\capsum_{\ell' + 1} \geq i$ we have $\txfee_{\capsum_{\ell' + 1}} = \txfee'_{\capsum_{\ell' + 1}}$; if $\txfee < \txfee_i$, we have $\txfee_{\capsum_{\ell' + 1}}\geq \txfee'_{\capsum_{\ell' + 1}}$, since when there is no increasing element in a sequence, element in the the new sequence will be no more than the element with the same rank in the original sequence.
    It follows that $\txfee_{\capsum_{\ell' + 1}}\geq \txfee'_{\capsum_{\ell' + 1}}$.
    Further recall that $\cost'_{\ell^\ast + 1} = \cost{\ell^\ast + 1}$ and $\capsum'_{\ell^\ast} = \capsum_{\ell^\ast}$.
    Therefore, for any $\ell^\ast > \tilde{\ell}$,
    $$
    \cost'_{\ell^\ast + 1} = \cost_{\ell^\ast + 1} > \txfee_{\capsum_{\ell^\ast} + 1} \geq \txfee'_{\capsum_{\ell^\ast} + 1} = \txfee'_{\capsum'_{\ell^\ast} + 1}.
    $$
    So $\ell' \leq \tilde{\ell}$ by the definition of $\ell'$.
    
    We have $\capsum'_{\ell'} \leq \capsum'_{\tilde{\ell}} = \capsum_{\tilde{\ell}}$.
    If task $i$ is allocated, it is charged $\txfee'_{\capsum'_{\ell'}}\geq \txfee'_{\capsum'_{\tilde{\ell}}}$.
    Since $\txfee > \txfee_i$, and when there is no decreasing element in a sequence, the element in the new sequence will be no less than the element with the same rank in the original sequence.
    So, $\txfee'_{\capsum'_{\tilde{\ell}}} \geq \txfee_{\capsum'_{\tilde{\ell}}} = \txfee_{\capsum_{\tilde{\ell}}} \geq \txfee_i.$
    It follows that the payment of task $i$ is at least $\txfee_i$ if task $i$ is allocated.
    Therefore, user $i$ cannot get positive utility if it changes its bid to $\txfee$.
    \end{itemize}
    \end{itemize}

    Above all, the mechanism is UDSIC.
    \wenhao{TODO: give a pass, look at notations}
\end{myproof}

\subsection{Full Proof of~\cref{prop:new-dpic}}
\begin{myproof}
    We prove it is unprofitable for any prover to bid other than its valuation.
    Recall that the users' bids satisfy $\txfee_1 \geq \cdots \geq \txfee_\numuser$, and the provers' bids satisfy $\cost_1 \leq \cdots \leq \cost_\numprover$.
    Suppose that $\cost_j$ is prover $n$'s valuation of its proof generation cost per unit capacity.
    We assume that the prover $j$ changes its bid from $\cost_j$ to $\cost$.
    After the bid change, the new user bids are $\txfee'_1 \geq \cdots \geq \txfee'_\numuser$, and the new prover bids are $\cost'_1 \leq \cdots \leq \cost'_\numprover$.
    The allocated number of provers is $\ell'$ and the allocated number of tasks is $\capsum'_{\ell'}$.
    Since no user's bid has changed, we have $\txfee'_i = \txfee_i$ for all $i = 1, \cdots, \numuser$.
    We discuss the utility change of prover $j$ depending on whether or not it is allocated, i.e., whether or not $j \in [1, \ell]$.
    A prover is said to have {\em rank $j$} if its cost is the $j$-th lowest among all provers.
    We denote the rank of prover $j$ before and after changing its bid with $R(j)$ and $R'(j)$.
    \wenhao{Added the following:}
    Each case's figure demonstrates the relation between variables, but there could be more possibilities.

    \begin{itemize}[leftmargin=*]
    \item If prover $j$ is allocated ($j \in [1, \ell]$), consider possible relations between $\cost, \cost_{\ell + 1}$.
    \begin{itemize}[leftmargin=*]
    \item If $\cost \leq \cost_{\ell + 1}$, the concern is that prover $j$ can increase its revenue while remaining allocated.

    \begin{center}
    \begin{tikzpicture}
    \draw[->] (-3,0) -- (5,0) node[right] {};
    \draw (-2,0.1) -- (-2,-0.1) node[below] {$\cost_j$};
    \draw (0,0.1) -- (0,-0.1) node[below] {$\cost$};
    
    \draw (3,0.1) -- (3,-0.1) node[below] {$\cost_{\ell + 1}$};
    \end{tikzpicture}
    \end{center}
    \aviv{The line figure above shows that $\cost > \cost_j$, while the proof below is general}
    \wenhao{I suppose it is okay}
    
    We show that in this case prover $j$ remains allocated, while the payment $\cost'_{\ell' + 1} \cdot \capacity_j$ stays the same, so changing the bid to $\cost$ is not profitable.
    
    First, we show that $R'(j) \le \ell$.
    This is because $\cost \leq \cost_{\ell + 1}$ and $\cost_j \leq \cost_{\ell + 1}$, which indicates that provers with rank greater than $\ell$ have the same rank before and after prover $j$ changes its bid.
    Thus, $R'(j) \le \ell$.

    Second, we prove that $\ell' = \ell$ by showing that both $\ell' \geq \ell$ and $\ell' \leq \ell$.
    Recall that $\ell = \underset{k}{\arg\max} \{ \cost_{k + 1} \leq \txfee_{\capsum_k + 1} \}$, and $\ell' = \underset{k}{\arg\max} \{ \cost'_{k + 1} \leq \txfee'_{\capsum'_k + 1} \}$.
    To see that $\ell' \geq \ell$, note that $\capsum'_{\ell} = \capsum_\ell$ and $\cost'_{\ell + 1} = \cost_{\ell + 1}$, since the provers with rank greater than $\ell$ before prover $j$ changes its bid will have the same rank after prover $j$ changes its bid.
    Further recall that $\txfee'_{\capsum_\ell} = \txfee_{\capsum_\ell}$.
    It follows that
    $$
    \cost'_{\ell + 1} = \cost_{\ell + 1} \leq \txfee_{\capsum_\ell + 1} = \txfee'_{\capsum_\ell + 1} = \txfee'_{\capsum'_\ell + 1}.
    $$
    So $\ell' \geq \ell$ by the definition of $\ell'$.
    To see $\ell' \leq \ell$, we have $\cost_{\ell^\ast + 1} > \txfee_{\capsum_{\ell^\ast} + 1}$ for all $\ell^\ast > \ell$ by the definition of $\ell$.
    Note that $\cost'_{\ell^\ast + 1} = \cost_{\ell^\ast + 1}$ and $\capsum'_{\ell^\ast} = \capsum_{\ell^\ast}$ since the provers with rank greater than $\ell$ before prover $j$ changes its bid will have the same rank after prover $j$ changes its bid.
    Also recall that $\txfee'_{\capsum_{\ell^\ast} + 1} = \txfee_{\capsum_{\ell^\ast} + 1}$.
    It follows that:
    $
    \cost'_{\ell^\ast + 1} = \cost_{\ell^\ast + 1} > \txfee_{\capsum_{\ell^\ast} + 1} = \txfee'_{\capsum_{\ell^\ast} + 1} = \txfee'_{\capsum'_{\ell^\ast} + 1}.
    $
    Therefore, $\ell' \leq \ell$ by the definition of $\ell'$.
    Above all, $\ell' = \ell$.

    We have $\cost'_{\ell' + 1} = \cost'_{\ell + 1} = \cost_{\ell + 1}$ as argued previously.
    Therefore, prover $j$ will be allocated and its payment is $\cost'_{\ell + 1} \cdot \capacity_j =\cost_{\ell + 1} \cdot \capacity_j $.
    Therefore, prover $j$ will not increase its utility after it changes its bid.
    
    \item If $\cost > \cost_{\ell + 1}$, the concern is that prover $j$ can increase $\ell$ and get allocated.
    \done\aviv{The text above says $\cost_j > \cost_{\ell + 1}$, but the line figure below shows that $\cost_j < \cost_{\ell + 1}$. If I understand correctly, this should be ``If $\cost > \cost_{\ell + 1}$'' (because this is the negation of the previous case). Also, in the current case, isn't the concern different? The prover was allocated before the manipulation, so the concern should be that the prover will manipulate its bid to still be allocated but get more revenue.}

    \begin{center}
    \begin{tikzpicture}
    \draw[->] (-3,0) -- (5,0) node[right] {};
    \draw (-2,0.1) -- (-2,-0.1) node[below] {$\cost_j$};
    \draw (0,0.1) -- (0,-0.1) node[below] {$\cost_{\ell + 1}$};
    
    \draw (3,0.1) -- (3,-0.1) node[below] {$\cost$};
    \end{tikzpicture}
    \end{center}
    
    We prove by contradiction that in this case prover $j$ is not allocated, thus getting zero revenue.
    Before that, we show $R'(j) > \ell$.
    This is because $\cost > \cost_{\ell + 1}$ and $\cost_j \leq \cost_{\ell + 1}$, which indicates that there will be at least $\ell$ provers with bids lower than $\cost$ after prover $j$ changes its bid.
    It follows that the rank of prover $j$ is more than $\ell$ after prover $j$ changes its bid.
    Suppose by contradiction that prover $j$ is allocated, then $R'(j) \le \ell'$, so $\ell' > \ell$.
    By the definition of $\ell'$, $\cost'_{\ell' + 1} \leq \txfee'_{\capsum'_{\ell'} + 1}$.
    Since $R'(j) \le \ell'$, then $\capsum'_{\ell'} = \capsum_{\ell'}$ and $\cost'_{\ell' + 1} = \cost_{\ell' + 1}$.
    Recall that $\txfee'_{\capsum_{\ell'} + 1} = \txfee_{\capsum_{\ell'} + 1}$.
    It follows that:
    $
    \cost_{\ell' + 1} = \cost'_{\ell' + 1} \leq \txfee'_{\capsum'_{\ell'} + 1} = \txfee_{\capsum_{\ell'} + 1}.
    $
    However, $\cost_{\ell' + 1} > \txfee_{\capsum_{\ell'} + 1}$ by the definition of $\ell$ since $\ell' > \ell$, leading to a contradiction.
    Therefore, prover $j$ is not allocated after changing its bid to $\cost$.
    Thus, prover $j$ receives zero utility.
    \end{itemize}

    \item If $j \in [\ell + 1, \numprover]$, we consider the cases depending on whether or not $\cost \geq \cost_{\ell + 1}$.
    \begin{itemize}[leftmargin=*]
    \item If $\cost < \cost_{\ell + 1}$, the concern is that prover $j$ can somehow get allocated while being paid more than $\cost_j$.

    \begin{center}
    \begin{tikzpicture}
    \draw[->] (-3,0) -- (5,0) node[right] {};
    \draw (-2,0.1) -- (-2,-0.1) node[below] {$\cost$};
    \draw (0,0.1) -- (0,-0.1) node[below] {$\cost_{\ell + 1}$};
    
    \draw (3,0.1) -- (3,-0.1) node[below] {$\cost_j$};
    \end{tikzpicture}
    \end{center}
    
    We show that in this case, prover $j$ will either not be allocated or be paid at most $\cost_j$ while being allocated.

    First, we show that $R'(j) \le \ell + 1$.
    This is because $\cost < \cost_{\ell + 1}$ and $\cost_j \geq \cost_{\ell + 1}$, which indicates that provers with rank strictly lower than $\ell +1$ before prover $j$ changes its bid will have rank lower than $\ell + 1$ after prover $j$ changes its bid, and that the prover $\ell + 1$ before prover $j$ changes its bid will have rank $\ell + 2$ after prover $j$ changes its bid.
    It follows that $R'(j) \le \ell + 1$.

    Secondly, we show that $\ell' < j$.
    By the definition of $\ell$, we have $\cost_{\ell^\ast +1} > \txfee_{\capsum_{\ell^\ast} + 1}$ for all $\ell^\ast \geq j > \ell$.
    Note that $\cost'_{\ell^\ast + 1} = \cost_{\ell^\ast + 1}$ for all $\ell^\ast \geq j$ since prover $j$ lowers its bid.
    Also recall that $\txfee'_{\capsum_{\ell^\ast} + 1} = \txfee_{\capsum_{\ell^\ast} + 1}$.
    It follows that for all $\ell^\ast \geq j$
    $$
    \cost'_{\ell^\ast + 1} = \cost_{\ell^\ast + 1} > \txfee_{\capsum_{\ell^\ast} + 1} = \txfee'_{\capsum_{\ell^\ast} + 1} = \txfee'_{\capsum'_{\ell^\ast} + 1}.
    $$
    Therefore, $\ell' < j$ by the definition of $\ell'$.

    Note that after prover $j$ changes its bid to $\cost$, it is either allocated or not allocated.
    If prover $j$ is not allocated, it gets zero utility.
    If prover $j$ is allocated, we have $\cost'_{\ell' + 1} \leq \cost'_{j} \leq \cost_j$.
    The last inequality holds because prover $j$ lowers its bid, and when there is no increasing element in a sequence, any element in the new sequence is at most equal to the element with the same rank in the original sequence.
    Therefore, prover $j$ at most earns $\cost_j \capacity_j$ and gets zero utility.
    In both cases, prover $j$ will not receive above zero utility.

    \item If $\cost \geq \cost_{\ell + 1}$, the fear is that prover $j$ can be allocated and get positive utility.

    \begin{center}
    \begin{tikzpicture}
    \draw[->] (-3,0) -- (5,0) node[right] {};
    \draw (-2,0.1) -- (-2,-0.1) node[below] {$\cost_{\ell +1}$};
    \draw (0,0.1) -- (0,-0.1) node[below] {$\cost$};
    
    \draw (3,0.1) -- (3,-0.1) node[below] {$\cost_j$};
    \end{tikzpicture}
    \end{center}
    
    We show that in this case prover $j$ cannot be allocated.

    Firstly, we show that $R'(j) \ge \ell + 1$.
    This is because $\cost \geq \cost_{\ell + 1}$ and $\cost_j \geq \cost_{\ell + 1}$, which indicates that provers with rank lower than $\ell +1$ before prover $j$ changes its bid will have rank lower than $\ell + 1$ after prover $j$ changes its bid.
    It follows that $R'(j) \ge \ell + 1$.

    Second, we show that $\ell'\geq \ell$.
    By the definition of $\ell$, we have $\cost_{\ell^\ast +1} > \txfee_{\capsum_{\ell^\ast} + 1}$ for any $\ell^\ast > \ell$.
    Note that $\cost'_{\ell + 1} \geq \cost_{\ell + 1}$, $\txfee'_{\capsum_{\ell^\ast} + 1} = \txfee_{\capsum_{\ell^\ast} + 1}$.
    Thus, for any $\ell^\ast > \ell$
    $$
    \cost'_{\ell^\ast + 1} \geq \cost'_{\ell + 1} \geq \cost_{\ell + 1} > \txfee_{\capsum_{\ell} + 1} \geq \txfee_{\capsum_{\ell^\ast} + 1} = \txfee'_{\capsum'_{\ell^\ast} + 1}.
    $$
    Therefore, $\ell' \geq \ell$ by the definition of $\ell'$.
    Since $R'(j) > \ell$, it is not allocated.
    Therefore, the utility of prover $j$ remains zero after changing its bid.
    \end{itemize}
    \end{itemize}

    Above all, the mechanism is PDSIC
\end{myproof}

\subsection{Proof of \cref{prop:capacity-misreport-fail}}
\begin{myproof}
    We discuss the utility of prover $j < N$. Suppose there are $\capsum_N$ tasks all with fee $\txfee = \cost_{j + 1}$, and $\cost_{j + 1} > \cost_j$.
    When prover $j$ is bidding honestly, $\ell = j$ and prover $j$ would receive $\capacity_j \cdot (\cost_{ j + 1} - \cost_j)$ utility.
    If prover $j$ bids a lower capacity $\capacity'_j$, still we have $\ell = j$ and prover $j$ would receive $\capacity'_j \cdot (\cost_{j + 1} - \cost_j)$ utility, which is less than the utility when the prover bids honestly.
\end{myproof}

\subsection{Proof of \cref{prop:prover-sybil-user}}
\begin{myproof}
    We discuss the utility of prover $j< N$.
    Suppose there are $\capsum_{N}$ tasks, all with fee $\txfee = \cost_{j + 1}$, and $\cost_{j + 1} > \cost_j$.
    When prover $j$ does not create Sybils, $\ell = j$ and prover $j$ receives $\capacity_j \cdot (\cost_{j + 1} - \cost_j)$ utility.
    If prover $j$ submits Sybil user bids, then prover $j$ is always allocated, and will receive the same payment.
    If any of the Sybil bids are allocated, prover $j$ will lose at least $\txfee$ utility.
    Therefore, in this case, prover $j$ does not profit by submitting Sybil user bids.

    \ignore{We further analyze the conditions where such attacks are possible.
    Assume that some prover $j$ finds it profitable to generate fake user tasks.
    Suppose after submitting fake user tasks $\ell$ increases to $\ell' > \max\{\ell + 1, j\}$.
    Suppose that $k = \underset{i}{\arg\min}\{ \txfee_i \geq \cost_{\ell'} \}$.
    Then the best strategy is to add $\capsum_{\ell'} - k + 1$ Sybil tasks at the fee of $\cost_{\ell' + 1}$.
    This is profitable if
\begin{equation}
\begin{cases}
    (\cost_{\ell' + 1} - \cost_{\ell + 1}) \cdot \capacity_j > \cost_{\ell' + 1} \cdot (\capsum_{\ell'} - k) + \cost_j \cdot \capacity_j, &j \leq \ell\\
    (\cost_{\ell' + 1} - \cost_{\ell + 1}) \cdot \capacity_j > \cost_{\ell' + 1} \cdot (\capsum_{\ell'} - k), &j \geq \ell + 1
\end{cases}
\end{equation}}
\end{myproof}

\subsection{Proof of~\cref{prop:sybil-good}}
\aviv{TODO: go through}
\begin{myproof}
    Suppose $\cost_{j, 1} \le \cost_{j, 2} \le\cdots\le \cost_{j, r}$.
    After the Sybil attack, suppose the allocated number of provers is $\ell'$ and the allocated number of tasks is $\capsum'_{\ell'}$.
    Also suppose the new prover bids are $\cost'_1\le\cost'_2\le\cdots$ after the Sybil attack.

    If $j \leq \ell$, such Sybil attacks are never profitable.
    \begin{itemize}[leftmargin=*]
        \item If $\cost_{j, r} \le \cost_{\ell + 1}$, then $\ell' 
 = \ell +r -1$, thus prover $j$ still gets the same utility.
        \item If there exists $q\leq r - 1$ such that $\cost_{j, q} \leq \cost_{\ell + 1} < \cost_{j, q + 1}$, we show that all the splits with index greater than $q$ will not be allocated by contradiction.
        Assume by contradiction that the split with index $q + 1$ will be allocated.
        It follows that $\cost'_{\ell' + 1}\ge \cost_{\ell + 2}$.
        Recall that $\ell = \underset{k}{\arg\max} \{ \cost_{k + 1} \leq \txfee_{\capsum_k + 1} \}$, and $\ell' = \underset{k}{\arg\max} \{ \cost'_{k + 1} \leq \txfee_{\capsum'_k + 1} \}$.
        Therefore,
        $
        \cost_{\ell + 2} \le \cost'_{\ell' + 1} \le \txfee_{\capsum'_{\ell'} + 1} \le \txfee_{\capsum_{\ell} + 1}
        $.
        However, by the definition of $\ell$ we have $\cost_{\ell + 2} > \txfee_{\capsum_{\ell + 1} + 1}\ge \txfee_{\capsum_{\ell} + 1}$, which leads to a contradiction.
        Therefore, prover $j$ will at most gain utility $\sum_{k = 1}^q \capacity_{j, k}\cdot (\cost_{\ell + 1} - \cost_j)$, and will lose utility.
    \end{itemize}

    If $j \ge \ell + 1$, we show that if with the Sybil attack prover $j$ can gain profit, then $\capsum'_{\ell'} > \capsum_\ell$ and $\cost_{\ell + 1} < \cost'_{\ell' + 1}$.
    To profit, we must have the first split is allocated, since prover $j$ is originally not allocated, and that $\cost_j <\cost'_{\ell' + 1}$.
    Thus, provers $1$ to $\ell$ are still allocated, so $\capsum'_{\ell'} \ge \capsum_\ell + \capacity_{j, 1} >\capsum_\ell$.
    Note that $\cost_j \ge \cost_{\ell + 1}$, we have $\cost_{\ell + 1}\le \cost_j < \cost'_{\ell' + 1}$.
    Therefore, welfare will increase.
\end{myproof}

\subsection{Proof of \cref{prop:collusion-prover-user}}
\begin{myproof}
Suppose the coalition comprises a user with valuation $\txfee$, and a prover with valuation $\cost$ and capacity $\capacity$, and that as part of their collusion, the user bids $\txfee'$, and the prover bids a cost $\cost'$ and capacity $\capacity'$, where $(\txfee', \cost', \capacity') \neq (\txfee, \cost,\capacity)$.
If $\capacity' > \capacity$ then the prover will get slashed when it is allocated, so let $\capacity' \le \capacity$.
We now split our analysis into different cases.
\begin{itemize}[leftmargin=*]
    \item If $(\txfee', \cost') = (\txfee, \cost)$, then $\capacity' < \capacity$.
    Given $2\capacity$ tasks with fee $2\cost$ and another prover with cost $2\cost$ and capacity $\capacity$, then the colluders' joint utility is $\capacity \cdot \cost$, which is lower than they could have obtained honestly.
    \item If $\txfee' > \txfee$, we give an counterexample as follows.
    There are another two provers, both with $\capacity$ capacity, and with costs $0$ and $\txfee'$ respectively, and there are $3\capacity$ other tasks with fee $\txfee'$.
    If $\cost' = \cost$, then after the strategic bidding the joint utility will reduce at least $\txfee' - \txfee$.
    If $\cost' > \cost > \txfee'$ or $\txfee' > \cost' > \cost$, then after the strategic bidding the joint utility will reduce at least $\txfee' - \txfee$.
    If $\cost' > \txfee' > \cost$, then after the strategic bidding the joint utility will reduce at least $\txfee' - \txfee + \capacity' \cdot (\txfee' - \cost)$.
    If $\cost' < \cost \le \txfee'$ or $\txfee' \le \cost' < \cost$, then after the strategic bidding the joint utility will reduce at least $\txfee' - \txfee$.
    If $\cost' <\txfee' < \cost$, then after the strategic bidding the joint utility will reduce at least $\txfee' - \txfee + \capacity' \cdot (\cost - \txfee')$.
    \item If $\txfee' < \txfee$, we give an counterexample as follows.
    There are two other provers, both with $\capacity$ capacity, and with costs $0$ and $(\txfee + \txfee') / 2$ respectively.
    There are $3 \capacity$ other tasks with fee $(\txfee + \txfee') / 2$.
    If $\cost' = \cost$, then after the strategic bidding the joint utility will reduce at least $(\txfee- \txfee') / 2$.
    If $\cost' > \cost \ge (\txfee + \txfee') / 2$ or $(\txfee + \txfee') / 2 \ge \cost' > \cost$, then after the strategic bidding the joint utility will reduce at least $(\txfee- \txfee') / 2$.
    If $\cost' > (\txfee + \txfee') / 2 > \cost$, then after the strategic bidding the joint utility will reduce at least $\txfee' - \txfee + \capacity' \cdot ((\txfee + \txfee') / 2 - \cost)$.
    If $\cost' < \cost \le (\txfee + \txfee') / 2$ or $(\txfee + \txfee') / 2 \le \cost' < \cost$, then after the strategic bidding the joint utility will reduce at least $(\txfee' - \txfee) / 2$.
    If $\cost' <(\txfee + \txfee') / 2 < \cost$, then after the strategic bidding the joint utility will reduce at least $ \txfee' - \txfee + \capacity' \cdot (\cost - (\txfee + \txfee') / 2)$.
    \item If $\txfee' = \txfee$, then $\cost' \ne \cost$, and we give a counterexample as follows.
    There is another prover with $\capacity$ capacity and bid $(\cost + \cost') / 2$, and there are $2 \capacity$ tasks with fee $(\cost + \cost') / 2$.
    If $\cost' > \cost$, after the strategic bidding the joint utility will reduce at least $\capacity' \cdot (\cost' - \cost) / 2$.
    If $\cost' < \cost$, after the strategic bidding the joint utility will reduce at least $\capacity' \cdot (\cost - \cost') / 2$.
\end{itemize}
\end{myproof}

\subsection{Proof of \cref{prop:collusion-two-prover}}
\begin{myproof}
    Assume that the provers have capacities and costs $(\capacity_1, \cost_1)$ and $(\capacity_2, \cost_2)$ respectively with $\cost_1 < \cost_2$, and strategically bid $(\capacity'_1, \cost'_1)$ and $(\capacity'_2, \cost'_2)$, with $\capacity'_1 \le \capacity_1$ and $\capacity'_2 \le \capacity_2$.
    (Otherwise, if they bid high capacity they will be slashed when allocated.)
    We discuss the cases as follows:
    \begin{itemize}[leftmargin=*]
        \item If $\cost'_1 = \cost_1$ and $\cost'_2 = \cost'_2$, we give a counterexample as follows.
        There is another prover with cost $2\cost_2$ and capacity $1$, and there are $(\capacity_1 + \capacity_2 + 1)$ tasks with fee $2\cost_2$.
        Then their joint utility will decrease $\cost_2 \cdot (\capacity_2 - \capacity'_2) + (2\cost_2 - \cost_1) \cdot (\capacity_1 - \capacity'_1)$.
        \item If $\cost'_2 > \cost_2$, we give a counterexample as follows.
        There is another prover with cost $(\cost_2 + \cost'_2) / 2$ and capacity $\capacity_2$, and there are $\capacity_1 + 2\capacity_2$ tasks with fee $(\cost_2 + \cost'_2) / 2$.
        In this case, their joint utility will decrease at least $\capacity_2 \cdot (\cost'_2 - \cost_2) / 2$.
        \item If $\cost'_2 < \cost_2$, we give a counterexample as follows.
        There is another prover with cost $(\cost_2 + \cost'_2) / 2$ and capacity $\capacity_2$, and there are $\capacity_1 + 2\capacity_2$ tasks with fee $(\cost_2 + \cost'_2) / 2$.
        In this case, their joint utility will decrease at least $\capacity'_2 \cdot (\cost_2 - \cost'_2) / 2$.
        \item If $\cost'_1 < \cost_1$, we give a counterexample as follows.
        There is another prover with cost $(\cost'_1 + \cost_1) / 2$ and capacity $\capacity_1$, and there are $2\capacity_1 + \capacity_2$ tasks with fee $(\cost'_1 + \cost_1) / 2$.
        In this case, their joint utility will decrease at least $\capacity'_1 \cdot (\cost_1 - \cost'_1) / 2$.
        \item If $\cost'_1 > \cost_1$, we give a counterexample as follows.
        There is another prover with cost $(\cost'_1 + \cost_1) / 2$ and capacity $\capacity_1$, and there are $2\capacity_1 + \capacity_2$ tasks with fee $(\cost'_1 + \cost_1) / 2$.
        In this case, their joint utility will decrease at least $\capacity_1 \cdot (\cost'_1 - \cost_1) / 2$.
    \end{itemize}
\end{myproof}

\section{Efficiency of the Core Mechanism}
We define efficiency loss as the difference between the social welfare of the outcome of the mechanism and the optimal social welfare.
Following the results of~\cite{mcafee1992dominant}, if all provers have unit capacity, the core mechanism of \tool achieves welfare equal to $1/\min (\text{\# users}, \text{\# provers})$, which is asymptotically optimal.
It is shown in~\cite{segalhalevi2018muda} that for any strongly budget-balanced and incentive-compatible mechanism where a seller has $M$ items and a buyer wishes to purchase at most $M$ units, the efficiency asymptotically converges to $0$ in $M$ when the per-item trade price is determined exogenously.
We suspect that a similar result may apply to the core mechanism of \tool in a myopic setting, due to pathological cases in which the capacity of the first unallocated prover is large.
However, we hypothesize that a non-myopic analysis in which pent-up user demand ``spills'' over to the next rounds may show that these pathologies have limited impact in the long run, making such an analysis a promising direction for future work.
For example, the related work of \citeauthor{nisan2023serial}~\cite{nisan2023serial} shows the monopolistic TFM results in unbounded welfare losses in the myopic case, while in the non-myopic setting, the obtained welfare is at least $\frac{1}{2}$ of the optimal one.
Several related works analyze far-sighted actor incentives, such as \citeauthor{gafni2024scheduling}~\cite{gafni2024scheduling} who consider competitive allocation strategies, and \citeauthor{penna2024serial}~\cite{penna2024serial} who derive minimal ``admission'' prices that guarantee eventual transaction allocation.

\end{document}